\documentclass[twoside,leqno,twocolumn]{article}  
\usepackage{ltexpprt} 
\usepackage{amssymb,latexsym}
\usepackage{amsmath}
\usepackage{graphicx}
\usepackage{tikz}
\usepackage{fancybox}
\usepackage{caption}
\usepackage{subcaption}

\def\cal{\mathcal}
\def\E{\mathbb E}

\def \hat {\widehat}

\def \und {\underline}
\def \m{{\boldsymbol m}}
\def \u {{\boldsymbol u}}

\def\EOP{{\vrule height 5pt width 5 pt depth 0 pt}}

 \newcommand{\comments}[1]{ \begin{center} {\fbox{\begin{minipage}[h]{0.9
            \linewidth} {\sf #1}  \end{minipage} }} \end{center}}

\parindent=0pt

\title{The recurrence function of a random  Sturmian word.}

\author{Pablo {\sc Rotondo} and Brigitte {\sc Vall\'ee}  } 

\date{July 10, 2016} 


\begin{document}

\maketitle  

\abstract{}  This paper  describes the probabilistic behaviour of a  random Sturmian word.  It performs the  probabilistic  analysis of the recurrence function which  can be viewed as a waiting time to discover all the factors of length $n$ of the Sturmian word. This parameter is  central to combinatorics of words. Having fixed a possible length $n$ for the factors,  we let $\alpha$ to be drawn uniformly from  the unit interval $[0,1]$, thus defining a random Sturmian word of slope $\alpha$. Thus the waiting time for these factors becomes a random variable, for which we study the limit distribution and the limit density.

\section{Introduction}  

\subsection*{Recurrence and Sturmian words.}

The recurrence function   measures  the  ``complexity'' of an infinite word and  describes the possible occurrences of finite factors inside it together with the maximal  gaps between successive occurrences.  This recurrence  function  is   thus widely studied, notably  in the case of Sturmian words.  Sturmian words are central in combinatorics of words, as they   are  in a  precise sense the simplest  infinite words which are not eventually periodic \cite{Lot}. With each Sturmian word is associated  an irrational number $\alpha$, which is called the slope of the Sturmian word, and  many  of its characteristics 
 depend on the continued fraction expansion   of $\alpha$.  This is in particular the case for the recurrence  function $n \mapsto R(\alpha, n)$, where  
 the  integer   $R(\alpha, n)$  is the length of the smallest ``window''  which is  needed for discovering  the  set  ${\cal L}_\alpha (n)$   of  all the finite factors of length $n$  inside  $\alpha$. As this set ${\cal L}_\alpha (n)$ is widely used in many applications of Sturmian words (for instance   
quasicrystals,  or  digital geometry),   the  function $n \mapsto R(\alpha, n)$ thus  intervenes very often  as a pre-computation cost.

 \medskip
   From  a result due to  Morse and Hedlund \cite{Morse},   it is known that  the recurrence function  $R(\alpha, n)$ 
   depends on  $\alpha$ via  its continued fraction expansion, and, notably,  its  {\em continuants}. The continuant $q_k(\alpha)$ is the denominator of the $k$-th convergent of $\alpha$, and, for an irrational $\alpha$,  the sequence $k \mapsto q_k(\alpha)$ is strictly increasing.  
The result of  Morse and Hedlund   
expresses $R(\alpha, n)$ in terms  of  the integer $n$ together with the two ends   of the interval $[q_{k-1}(\alpha), q_k(\alpha)[$ which contains $n$. More precisely,   for any  $ n \in 
 [q_{k-1}(\alpha),  q_k(\alpha)[ $, one has 
 \begin{equation} \label{R} R(\alpha, n)=n-1+q_k(\alpha)+q_{k-1}(\alpha) \, .
 \end{equation} 

  It is thus natural to study the recurrence function via 
  \begin{equation} \label {quot}
  S(\alpha, n):= \frac {R(\alpha, n) +1}{ n} = 1 + \frac{q_{k-1}(\alpha)}{n}+\frac {q_k(\alpha)}{n} \, , 
  \end{equation}
  called the recurrence quotient.
   Most of the classical studies   deal  with a {\em fixed} $\alpha$, and   the  usual focus is put on {\em extremal}  behaviours  of the recurrence function.  The following result exhibits a large variability of the function $n \mapsto S(\alpha, n)$.

  \begin{proposition}\label{Morse}  The following holds  for the recurrence quotient defined in \eqref{quot}: 
\begin{itemize}
\item[$(i)$]   
For  any  irrational  real $\alpha$, 
  one has
  $$ \liminf_{n\to\infty} S(\alpha, n) \le 3.$$
 \item [$(ii)$] {\rm [Morse and Hedlund]  \cite{Morse}}
  For almost any   irrational $\alpha$,  and any $ c>0$ one has
$$ \limsup _{n\to\infty}\frac {S(\alpha, n)}{ \log n}= +\infty,\quad 
\limsup_{n\to\infty} \frac {S(\alpha, n)}{ (\log n)^{1+ c}}= 0 $$
\end{itemize}
\end{proposition}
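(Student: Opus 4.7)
\emph{Plan.} Both parts reduce to a judicious choice of $n$ inside an interval $[q_{k-1}(\alpha), q_k(\alpha))$, combined with the closed form \eqref{quot} for $S$. Part (ii) additionally invokes two classical metric results on continued fractions.

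\emph{Part (i).} I evaluate $S$ at the right endpoint $n_k := q_k(\alpha) - 1$, which lies in $[q_{k-1}, q_k)$ as soon as $q_{k-1} < q_k$ (true for all sufficiently large $k$ when $\alpha$ is irrational). Then \eqref{quot} gives
$$
S(\alpha, n_k) = 1 + \frac{q_{k-1}(\alpha)}{q_k(\alpha) - 1} + \frac{q_k(\alpha)}{q_k(\alpha) - 1}.
$$
Since $q_{k-1} \le q_k - 1$, the middle term is at most $1$, and the last term tends to $1$ as $k \to \infty$, so $\liminf_n S(\alpha, n) \le 3$.

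\emph{Part (ii).} I combine Khinchin--L\'evy ($\log q_k(\alpha) \sim \pi^2 k / (12 \log 2)$ for almost every $\alpha$) with Borel--Bernstein: for a positive $\psi$, the event $\{a_k(\alpha) > \psi(k) \text{ infinitely often}\}$ has Lebesgue measure $1$ if $\sum 1/\psi(k) = \infty$ and $0$ if $\sum 1/\psi(k) < \infty$. The recurrence $q_k = a_k q_{k-1} + q_{k-2}$ gives the two-sided estimate $a_k \le q_k/q_{k-1} \le a_k + 1$. For the blow-up statement I take $n = q_{k-1}(\alpha)$, so \eqref{quot} yields $S(\alpha, q_{k-1}) = 2 + q_k/q_{k-1} \ge 2 + a_k$; since $\log q_{k-1}$ grows linearly in $k$ almost surely, applying Borel--Bernstein with the divergent gauge $\psi(k) = k \log k$ shows that $a_k/k$ is unbounded along a subsequence, forcing $\limsup_n S(\alpha, n)/\log n = +\infty$. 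For the vanishing statement, I bound $S(\alpha, n) \le 2 + q_k/q_{k-1} \le 3 + a_k$ uniformly on $n \in [q_{k-1}, q_k)$ and apply Borel--Bernstein with the convergent gauge $\psi(k) = k (\log k)^{1+c/2}$, giving $a_k = O(k (\log k)^{1+c/2}) = o(k^{1+c})$ almost surely; combined with $\log n \ge \log q_{k-1} \asymp k$, this yields $\limsup_n S(\alpha, n)/(\log n)^{1+c} = 0$.

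\emph{Main obstacle.} The arithmetic is routine once the recurrence for $q_k$ is in hand; the real content lies in calibrating the Borel--Bernstein gauges to match the logarithmic scales in the statement. The gap between \emph{``liminf is $O(1)$ along $n = q_k - 1$''} and \emph{``limsup is $\omega(\log n)$ along $n = q_{k-1}$''} reflects precisely the dichotomy between the divergence of $\sum 1/(k \log k)$ and the convergence of $\sum 1/(k (\log k)^{1+c})$, which is the mechanism underlying Borel--Bernstein.
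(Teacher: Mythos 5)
Your proof is correct and follows exactly the route the paper sketches: the paper gives no detailed proof of this proposition, merely citing Morse--Hedlund for part (ii) and noting that the Borel--Bernstein theorem (controlling the partial quotients $a_k$, hence the ratio $\rho=q_{k-1}/q_k$) is the main step, together with the observation that $S(\alpha,n)$ is small when $n$ is near the right end $q_k(\alpha)$ and large when $n$ is near the left end $q_{k-1}(\alpha)$ --- which is precisely your choice of test points $n=q_k-1$ and $n=q_{k-1}$. Your calibration of the divergent and convergent Borel--Bernstein gauges against the Khinchin--L\'evy growth $\log q_k\sim \pi^2 k/(12\log 2)$ supplies exactly the details the paper omits.
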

  This result also shows that the quotient recurrence  is ``small''  for integers $n$ which are close to the  right end of the interval $[q_{k-1}(\alpha), q_k(\alpha)[$, whereas  it is ``large'' when $n$ is close to $q_{k-1}(\alpha)$ (see Figure 1).

 \begin{figure*} \label{recquot}
\centering
\begin{subfigure}[b]{0.48\textwidth}
\includegraphics[width = 6cm]{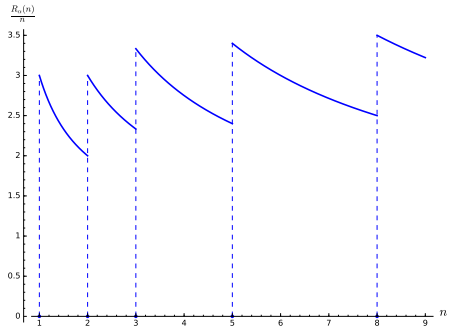}
\caption{ Recurrence quotient for $\alpha=\varphi^2$, \\with $\varphi = (\sqrt{5} - 1)/{2}$.}
\end{subfigure}
\begin{subfigure}[b]{0.48\textwidth}
\includegraphics[width=6cm]{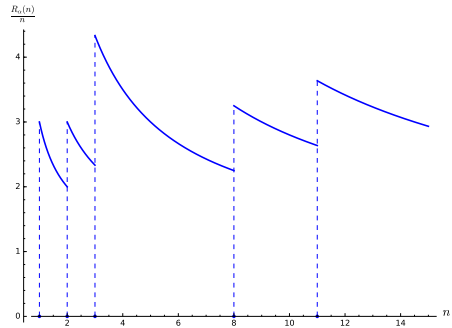}
\caption{Recurrence quotient for $\alpha=\varphi^2$, \\with $\alpha = 1/e$.}
\end{subfigure}
\caption {Examples of recurrence quotients for  two Sturmian words.}
\end{figure*}

 \subsection*{Two  different  probabilistic settings.}
  Here, we adopt a probablistic approach,   and   consider a random Sturmian word, associated with a random irrational slope $\alpha$ of the unit interval.  There are now two possibilities: 
  
  \begin{itemize}
  \item [$(i)$] fix the integer $n$ (corresponding to the length of the factors, which will further tend to $\infty$); now the index $k$  of the interval $[q_{k-1}(\alpha), q_k(\alpha)[$ which contains $n$ is a random variable $k= k(\alpha, n)$. 
   This model may be  called  the model ``with  a large fixed $n$''.
The sequence $n \mapsto S(\alpha, n)$ is  now a  sequence of random variables. 

\item[$(ii)$] fix a depth $k$  (that further tends to $\infty$),  and a fixed $\mu \in [0, 1]$.  For  any slope $\alpha$, we consider the interval $[q_{k-1}(\alpha), q_k(\alpha)[$ delimited by  the  two successive continuants with indices $k-1$ and $k$, and we choose there the integer $n:= n_\mu(\alpha, k)$ at a barycentric position, which is now a random variable.   This model may be  called  the model ``with  a large fixed $k$''. The  sequence $k \mapsto S_k^{\langle \mu \rangle} (\alpha) :=S(\alpha, n_\mu(\alpha, k))$ is  now a   sequence of random variables.

\end{itemize}
In each case, we are interested  by the same type of questions  about the sequence of random variables: Does there exist a limit   for the expectations? a limit distribution? a limit density?

 
  
   \subsection*{The main results.}  We have already performed the  probabilistic study  of type $(ii)$ (the model with ``a large fixed $k$'') in \cite{mfcs}, and we return to it in Section 5.1. \\
   We here deal  with the recurrence quotient within the model $(i)$  (the model with ``a large fixed $n$'').  We obtain three results for the recurrence quotient; we consider  the random variables $\alpha \mapsto  S(\alpha, n)$ and study them for large $n$. We exhibit a limit  for their distribution,  and prove that there exists a limit density, as $n \to \infty$. We also study the  conditional expectation of the recurrence quotient, when we exclude the  possibility for $n$ to be too close of the  left end of the interval  $[q_{k-1}(\alpha), q_k(\alpha)[$.  And we exhibit  a class of events, for which the order of this conditional mean value  is  exactly of order $\log n$. This can be viewed as a probabilistic extension of the Morse and Hedlund result (compare with Proposition \ref{Morse}).  
  
  \smallskip 
   Our proofs use elementary methods: they are based on a precise comparison between  an integral and its Riemann sum ; however, the integral is  improper (but convergent)  and the Riemann sum  is constrained by a coprimality condition, what we call a ``coprime Riemann sum''. 
   
   \smallskip
   We also introduce a general family of functions, called  continuant-functions or ${\cal Q}$-functions, which  are defined via the  sequence of continuants $k \mapsto q_k(\alpha)$. The recurrence quotient is an instance of such a function,  but the other  ``geometric'' parameters of interest provide other natural examples of such a notion.  And the  framework of the paper is well-adapted to the  study of  a general ${\cal Q}$ function.

 \subsection*{Plan of the paper.}  Section 2   gives a precise definition of the parameters under study, introduces the class of ${\cal Q}$-functions and states our three results: limit distributions in Theorem \ref{thm1}, limit densities in Theorem \ref{thm2},   and conditional expectations in Theorem \ref{thm3}. Section 3 is devoted to the proof  of the first two results, whereas Section 4 focuses on the  study of conditional expectations.  Section 5 compares the results obtained in the two models, the present model (with large fixed $n$), and the  model  (with large fixed $k$)  previously  studied in  \cite{mfcs} .

\section{General framework and main results.}
The section first makes precise the  notions that were informally defined in the introduction, notably Sturmian words and 
recurrence. Then, it introduces  parameters which describe the  geometry of the ``continuant intervals'' or  the  position of the integer $n$ inside the continuant interval.  Section 2.3 defines the  class  of  ${\cal Q}$ functions that provides a convenient framework for  our study.   Then,  we state Theorems \ref{thm1}  \ref{thm2}  in Sections 2.5 and 2.6, for general ${\cal Q}$-functions. We return to our specific parameters of interest, notably the recurrence function in Section  2.7, with  two figures (Figures 2 and 3).  Section 2.8 concludes with conditional expectations.

\subsection{More on Sturmian words and  recurrence function.}
  We consider a   finite set  ${\cal A}$ of {\em symbols}, called   {\em alphabet}. Let $\u =(u_n)_{n \in \mathbb {N}}$ be an infinite word  in  ${\mathcal A }^{\mathbb N} $.
 A  finite word $w$  of length $n$ 
is a factor of    $\u$ 
 if there exists an index $m$ 
for which $w= u_{m}\dots u_{m+n-1}$.  
Let  ${\cal L} _{ \u }(n)$ stand for the  set  of   factors of length $n$  of $\u$.  
Two functions describe the set ${\cal L} _{ \u }(n)$ inside the word $\u$, namely  the complexity and the recurrence function.  
 
 \smallskip
 {\bf \em Complexity.}
 The (\emph{factor}) \emph{complexity function} of the infinite word $\u$  is defined as the sequence $n \mapsto p _{ u }(n):= |{\cal L} _{ \u }(n)|$.  The eventually periodic  words are the simplest  ones, in terms
of  the complexity function, and satisfy  $ p _{ \u }(n) \le n$ for  some $n$. 

\smallskip
 The simplest words  that  are not eventually periodic  satisfy the equality  $p _{ \u }(n) = n + 1 $ for each $n\ge 0.$  Such  words do exist, they are called   {\em Sturmian words}.  
Moreover,  Morse and Hedlund  provided  a powerful arithmetic description of Sturmian words (see also \cite{Lot} for more on Sturmian words).
   
 \begin{proposition} {\rm [Morse and Hedlund] \cite{Morse}} \label{prop:sturm} {Associate with  a pair 
 $(\alpha, \beta)\in [0, 1]^2$  
 the two  infinite words $\underline {\frak S}(\alpha, \beta)$ and $\overline {\frak S}(\alpha, \beta)$ whose $n$-th  symbols  are respectively  $$
\underline  u_n = \lfloor \alpha(n+1) + \beta\rfloor -  \lfloor \alpha n  + \beta\rfloor,$$
$$   \overline u_n = \lceil \alpha(n+1) + \beta\rceil -  \lceil \alpha n  + \beta\rceil.$$ Then a word $\u\in \{0, 1\}^{\mathbb N}$ is Sturmian if and only if  it  equals  $\underline {\frak S}(\alpha, \beta)$ or  $\overline {\frak S}(\alpha, \beta)$ for   a pair $(\alpha, \beta)$   formed  with an irrational $\alpha \in ]0, 1[$   and a real  $\beta\in [0, 1[$.}
\end{proposition}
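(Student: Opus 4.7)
The plan is to prove both directions via a symbolic-dynamics viewpoint, realising the mechanical words as codings of an irrational rotation on the torus $\mathbb{T} = \mathbb{R}/\mathbb{Z}$.

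For the ``if'' direction, fix an irrational $\alpha \in \,]0,1[\,$ and interpret $\underline{\mathfrak{S}}(\alpha,\beta)$ as the itinerary of the orbit $n \mapsto R_\alpha^n(\beta) = \{n\alpha + \beta\}$ of the rotation $R_\alpha(x) = x + \alpha$ under the two-arc partition $I_0 = [0, 1-\alpha)$, $I_1 = [1-\alpha, 1)$. A direct computation shows that $\underline{u}_n$ is $0$ or $1$ according to whether $\{n\alpha+\beta\}$ lies in $I_0$ or $I_1$. Each factor $w_0 \cdots w_{n-1}$ corresponds to the cylinder $C_w = \bigcap_{k=0}^{n-1} R_\alpha^{-k}(I_{w_k})$; density of the orbit (from irrationality of $\alpha$) ensures every non-empty cylinder is realised in $\u$. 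These cylinders partition $\mathbb{T}$ into arcs whose collective endpoints are the $n+1$ distinct points $\{-j\alpha \bmod 1 : 0 \le j \le n\}$, yielding exactly $n+1$ arcs. Hence $p_{\underline{\mathfrak{S}}(\alpha,\beta)}(n) = n+1$, and an analogous computation with the closed/open convention reversed handles $\overline{\mathfrak{S}}$.

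For the ``only if'' direction, start from a binary word $\u$ satisfying $p_{\u}(n) = n+1$ for every $n$. The identity $p_{\u}(n+1)-p_{\u}(n) = 1$ forces exactly one right-special factor at every length, and iterating this constraint via a bispecial-factor analysis yields the \emph{balance} property: $\bigl||w|_1 - |w'|_1\bigr| \le 1$ for any two factors $w, w'$ of the same length. Balance implies that the frequency $\alpha := \lim_n (u_0 + \cdots + u_{n-1})/n$ exists, and $\alpha$ must be irrational since $\u$ is not eventually periodic. Finally, construct $\beta$ by compactness: the set of admissible intercepts consistent with longer and longer prefixes forms a nested family of non-empty closed subsets of $[0,1]$, whose intersection supplies $\beta$. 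The two variants $\underline{\mathfrak{S}}$ and $\overline{\mathfrak{S}}$ then correspond to the two half-open conventions that arise when an orbit point lands on the boundary $\{0, 1-\alpha\}$.

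The main obstacle is proving the balance property from the complexity assumption alone, which is the combinatorial heart of the theorem. The cleanest route shows that if balance failed, one could combine two ``unbalanced'' factors with sufficiently long extensions to exhibit at least $n+2$ distinct factors of some length, contradicting $p_{\u}(n) = n+1$. Once balance is established, the extraction of the irrational frequency $\alpha$ and the compactness construction of $\beta$ are comparatively routine, the only subtlety being the dichotomy between $\underline{\mathfrak{S}}$ and $\overline{\mathfrak{S}}$ at the boundary case.
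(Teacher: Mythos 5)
The paper does not actually prove this proposition: it is quoted directly from Morse and Hedlund \cite{Morse} and used as a black box, so there is no internal proof to compare yours against. Judged on its own terms, your outline follows the standard modern route (coding of the irrational rotation $R_\alpha$ by the two-arc partition for the ``if'' direction; complexity $\Rightarrow$ balance $\Rightarrow$ frequency $\Rightarrow$ intercept for the converse), and the ``if'' half is essentially complete: the identification of $\underline u_n$ with the indicator of $\{n\alpha+\beta\}\in[1-\alpha,1)$, the $n+1$ cylinder arcs whose endpoints are the points $-j\alpha \bmod 1$ for $0\le j\le n$, and the density argument guaranteeing that every non-empty cylinder is visited are all correct and give $p_{\u}(n)=n+1$.

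The converse, however, is only a road map. You say yourself that the balance property is ``the combinatorial heart of the theorem'' and then dispose of it in a single sentence (``if balance failed, one could combine two unbalanced factors with sufficiently long extensions to exhibit at least $n+2$ distinct factors''). That sentence is not a proof: the standard argument first passes to a minimal-length unbalanced pair to produce a word $v$ such that both $0v0$ and $1v1$ are factors, and only then does a count of the right extensions of $v$ (using that $p_{\u}(n+1)-p_{\u}(n)=1$ forces a unique right-special factor of each length) contradict $p_{\u}(n)=n+1$. None of this is supplied, and it is the step where an attempted proof could actually go wrong. Likewise the final extraction of $(\alpha,\beta)$ --- non-emptiness of the nested family of admissible intercepts, and the verification that a boundary orbit point yields exactly the two conventions $\underline{\frak S}$ versus $\overline{\frak S}$ and nothing else --- is asserted rather than argued. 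So the proposal is a correct and well-organised plan, but as a proof it has a genuine gap precisely at the step you identify as the hard one.
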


\smallskip 
{\bf \em Recurrence.}
It is also important  to study  where   finite factors occur inside the infinite word $u$.
  An infinite word  $\u \in {\mathcal A}^{\mathbb N}$ is   \emph{uniformly recurrent}  if every factor of $\u$ 
appears infinitely often and with bounded gaps. More precisely,   denote by $w_\u(q, n)$ the  minimal number of symbols $u_k$ with $k \ge q$ which have to be inspected for discovering  the whole set ${\cal L}_\u (n)$ from the index $q$. Then, the integer $w_\u(q, n)$ is a sort of   ``waiting time'' and   $\u$ is uniformly recurrent if  each set $\{w_\u(q, n) \mid q \in {\mathbb N}\}$  is bounded, and      the {\em recurrence function} $n \mapsto   R _{ \u}(n)$   is defined as 

\vskip0.1cm
\centerline{
$R _{ \u}(n) :=  \max  \{w_u(q, n) \mid  q \in {\mathbb N} \}$.}
\vskip 0.1cm 
 We then recover  the usual definition: Any  factor  of length  $R _{\u}(n)$  of  $u$ contains all the factors of length $n$  of $\u$, and the length $R _{\u }(n)$ is the smallest integer which satisfies this
 property. \\ 
  The inequality $R _{\u }(n) \ge p _{ \u }(n) + n-1$ thus  holds.

\smallskip
Any Sturmian word  is uniformly recurrent. Its 
  recurrence function   only depends on  the slope $\alpha$ and  is thus denoted by $ n \mapsto R(\alpha, n)$.  As we already said, it only depends on  $\alpha$ via  the sequence of its  {\em continuants} $k \mapsto q_k(\alpha)$, and satisfies \eqref{R}.

\subsection{Position parameters.}
  Besides the recurrence quotient, there are also three other parameters $\nu, \mu, \rho$ which describe the geometry of the interval $[q_{k-1}(\alpha), q_k(\alpha)[$ which contains $n$ (this is the case for $\rho$)  or the position of $n$ inside this interval (the case for $\mu$ and $\nu$)
  \begin{equation} \label {rho}
     \rho (\alpha, n) = \frac{q_{k-1}(\alpha)}{q_k(\alpha)}\, ,  
     \end{equation}
     \vskip -0.5cm 
  \begin{equation} \label{munu}
   \mu(\alpha, n): = \frac{n-q_{k-1}(\alpha)} {q_k(\alpha)-q_{k-1}(\alpha)}, \quad  \nu(\alpha, n) =  \frac {n}{q_k(\alpha)} \, . 
   \end{equation}
  
  When $n$ belongs to the interval  $ [q_{k-1}(\alpha),  q_k(\alpha)[$,   the  recurrence quotient    is expressed with  $\rho$ and $\nu$ as  
  \begin{equation} \label{ratio}
  S(\alpha, n)  
 =  1 +\frac{1 +\rho(\alpha, n)}{\nu(\alpha, n)} . 
 \end{equation} As $\nu(\alpha, n)$ belongs to the interval $[\rho(\alpha, n), 1]$, the following bounds hold
  \begin{equation} \label{ratio1}
  2 + \rho(\alpha, n)  \le S(\alpha, n) 
  \le   2 + \frac {1}{\rho(\alpha, n)}
  \end{equation}
 (the lower bound  holds for $n$  close to $q_k(\alpha)$  whereas  the  upper bound is attained  for $n= q_{k-1} (\alpha)$).
 
 \smallskip
   The ratio $\rho
   (\alpha, n)$ belongs to $]0, 1]$, and 
 the Borel-Bernstein   Theorem  proves
 that $\liminf _{n\to\infty}  \rho(\alpha, n)= 0$  for almost any   irrational $\alpha$.  This is the main step for proving  Proposition \ref{Morse}.

  \begin{figure*}
\centering
\begin{center}
{\def\arraystretch{2.5}\tabcolsep=15pt
  \begin{tabular}{| c | c || c |}
    \hline
    Parameter & Function $f(x,y)$ & Density $\frac{12}{\pi^2}J_f(\lambda)$ \\ \hline\hline
    $S$ & $1+x+y$ & $ \begin{cases}	\frac{12}{\pi^2}\frac{1}{\lambda-1}\log(\lambda-1) &\mbox{if } 2\leq \lambda\leq 3 \\
 \frac{12}{\pi^2}\frac{1}{\lambda-1}\log(1 + \frac{1}{\lambda-2}) & \mbox{if } \lambda\geq 3\,. \end{cases}$ \\ \hline
    $\rho$ & $\displaystyle \frac{x}{y}$ & $\displaystyle \frac{12}{\pi^2}\frac{1}{1+\lambda}|\log\lambda|$  \qquad  for  $0\leq \lambda\leq 1$ \\ \hline
    $\mu$ & $\displaystyle \frac{1-x}{y-x}$ & $\displaystyle \begin{cases} 	\frac{12}{\pi^2}\frac{1}{2\lambda -1}\left(2\log 2 - \frac{\log \lambda}{\lambda-1}\right) &\mbox{if } \lambda \neq 1/2 \\
\frac{24}{\pi^2}\left(1-\log 2\right) & \mbox{if } \lambda = 1/2\,. \end{cases}$ \\ \hline
    $\nu$ & $\displaystyle \frac{1}{y}$ & $\displaystyle \frac{12}{\pi^2} \frac 1 \lambda \log(1+\lambda)$  \qquad  for  $0\leq \lambda\leq 1$ \\
    \hline
  \end{tabular}
  }
\end{center}
		\label{fig:limiting densities}
        \caption{Limit densities for the main parameters.}
\end{figure*}

\subsection{${\cal Q}$-functions.}
More generally, we  are interested in functions whose definition strongly depends  on the  partition defined by the continuants, and consider  the  functions  $(\alpha, n) \mapsto \Lambda (\alpha, n)$  that are associated with  some  function $f$ and  are written as,  
\begin{equation} \label{Lambda} 
\Lambda(\alpha, n) =  f\left(\frac {q_{k-1} (\alpha) }{n},  \frac {q_{k} (\alpha) }{n}\right), \end{equation}
 as soon as $n \in [q_{k-1} (\alpha), q_k (\alpha)[$.\\ 
 In the following,    we restrict ourselves to  a  function   $f$   that satisfes the following three properties
  \begin{itemize} 
 \item[$(i)$]   it is  written as  the non trivial quotient of two linear functions
 \begin{equation} \label{coeff}  f(x, y) = \frac{a_1 x+b_1 y+c_1}{ a_2x +b_2 y + c_2} \, ;
 \end{equation}
 \item[$(ii)$]  it is  defined on the  unbounded rectangle 
$$\cal{R} := \{ (x, y) \mid 0 < x \le 1 < y \}, $$ 
  \item[$(iii)$]  it is   non negative on ${\cal R}$ .
  \end{itemize} 
A function  $\Lambda$ which is written  as in \eqref{Lambda} in terms of such a function $f$  is called a continuant-function, or a ${\cal Q}$-function. 

\medskip
Our four parameters of interest, namely the recurrence quotient,  the ratio $\rho$ and the two  parameters which  describe the position of integer $n$ with respect to the interval $[q_{k-1}(\alpha), q_k(\alpha)[$ are ${\cal Q}$-functions, associated to the following functions $f $
$$ f_S (x, y) = 1 + x+ y, $$
$$\ \  f_\rho (x, y) = \frac x y , \quad  f_\mu(x, y)= \frac {1-x}{y-x}, \quad f_\nu(x, y) = \frac 1  y\, .$$

\subsection{Probabilistic setting.} 
We recall the present setting, already described in the introduction.
 We consider a fixed integer $n$, and  a random real $\alpha$ in the unit interval $[0, 1]$. The sequence $\Lambda_n(\alpha):= \Lambda(\alpha, n)$ is  now a  sequence of random variables. We are interested in the  limit distribution of the sequence when $n \to \infty$.   Does there exist a limit distribution? a limit density?

 \subsection{General results - distributions.}
 In the distribution study, we  associate  with a real $\lambda\ge 0$ the subdomain of ${\cal R}$,  
\begin{equation}   \label{Deltaf}
\Delta_f(\lambda):= \{(x, y) \mid 0 \le x \le1 \le y ; \ f(x, y) \le \lambda\}\, 
\end{equation} 
(which is  a convex domain due to the particular form of the function $f$), 
and associate the integral 
\begin{equation} \label {If}
 I_f (\lambda) =  \iint_{\Delta_f(\lambda) }\omega(x, y) dx dy = I[\omega, \Delta_f(\lambda)]  \, , 
 \end{equation}
which involves the    function  $\omega$ defined on ${\cal R}$  as 
\begin{equation}  \label{omega} \omega(x, y)  = \frac {1}{y(x+y)} \, ,  
\end{equation}
whose integral on ${\cal R}$ satisfies $I(\omega, {\cal R}) = \pi^2/12$. The associated  density 
\begin{equation} 
\label{psi}
\psi (x, y)=  \frac {12}{\pi^2}  \frac {1}{y(x+y)}
\end{equation}
plays a fundamental role in the sequel, as  
our originally discrete distribution smooths out (converges weakly) to  the  distribution associated with the density $\psi$, as the following result shows:

\begin{theorem}
\label{thm1}
Consider a ${\cal Q}$-function associated with a function $f$. 
Then 
the sequence $n \mapsto \Lambda_n(\alpha)$ as $n\to\infty$  admits a limit distribution, and  the sequence 
\begin{equation}
F_n( \lambda) := {\mathbb{P}}\left[\Lambda_n \leq \lambda\right] = \frac{12}{\pi^2}   I_f(\lambda)   + O\left(\frac{1}{n}\right)\,,
\end{equation}
involves the integral $I_f (\lambda)$ defined in  \eqref{If}. Moreover, 
 the constant  does not  depend on  the pair $(f, \lambda)$  
\end{theorem}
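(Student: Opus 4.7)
The strategy is to rewrite $F_n(\lambda)$ as a discrete sum indexed by coprime integer pairs and then recognise this sum as a constrained (``coprime'') Riemann sum for $I_f(\lambda)$. For each $\alpha\in[0,1]$ outside a null set, the index $k=k(\alpha,n)$ is uniquely defined, and the pair of continuants $(q_{k-1}(\alpha),q_k(\alpha))$ is a coprime pair $(p,q)$ with $1\le p\le n<q$. Let $E_{p,q}$ be the set of such $\alpha$. A standard computation with fundamental continuant intervals, using that each admissible coprime $(p,q)$ with $p<q$ arises from exactly two partial-quotient sequences (one per parity of the CF length, e.g.\ $[0;m_1,\dots,m_k]$ versus $[0;m_1,\dots,m_k-1,1]$), gives the measure
\[ |E_{p,q}|=\frac{2}{q(p+q)}.\]
Splitting $\gcd(p,q)=1$ via Möbius inversion ($p=dp'$, $q=dq'$) yields
\[ F_n(\lambda)=2\sum_{d\ge 1}\frac{\mu(d)}{d^{2}}\,\Sigma_{n/d}(\lambda),\qquad \Sigma_{N}(\lambda):=\!\!\sum_{(p'/N,\,q'/N)\in \Delta_f(\lambda)}\!\!\frac{1}{q'(p'+q')}. \]

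The inner sum $\Sigma_N(\lambda)$ is precisely the Riemann sum, of step $1/N$, associated with the weight $\omega(x,y)=1/[y(x+y)]$ on the convex region $\Delta_f(\lambda)\subset\mathcal{R}$. Since $\omega(x,y)=O(1/y^{2})$ as $y\to\infty$, the improper integral $I_f(\lambda)$ converges, and a direct comparison between the sum and the integral --- separating a bounded truncation, where convexity of $\Delta_f(\lambda)$ yields a classical $O(1/N)$ boundary error, from the unbounded tail, which is controlled by the decay of $\omega$ --- gives
\[ \Sigma_N(\lambda)=I_f(\lambda)+O(1/N),\]
with an $O$-constant uniform in $\lambda$ and in the coefficients of $f$.

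Assembling the pieces with $\sum_{d\ge 1}\mu(d)/d^{2}=6/\pi^{2}$ delivers the main term $(12/\pi^{2})\,I_f(\lambda)$. The cumulative remainder is $2\sum_{d\ge 1}(\mu(d)/d^{2})\cdot O(d/n)$, and the \emph{main obstacle} is to improve this from the naive $O((\log n)/n)$ to the sharp $O(1/n)$ stated in the theorem. I would do this either by exploiting the mean-value cancellation of $\mu$ in the partial sums $\sum_{d\le D}\mu(d)/d$, truncating the Möbius sum at $d\le\sqrt{n}$ and bounding the tail by crude absolute estimates, or --- more robustly --- by bypassing the per-$d$ loss altogether: using a single two-variable coprime-detection identity, one extracts the factor $1/\zeta(2)$ in closed form and reduces the estimate to a single Riemann-sum comparison on $\Delta_f(\lambda)$, whose boundary contribution is dominated, uniformly in $(f,\lambda)$, by the convexity of $\Delta_f(\lambda)$ combined with the decay of $\omega$.
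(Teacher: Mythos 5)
Your proposal follows essentially the same route as the paper's proof. The measure computation $|E_{p,q}|=2/(q(p+q))$ via fundamental intervals and the double-counting of coprime pairs, leading to the expression of $F_n(\lambda)$ as a coprime Riemann sum of $2\omega$ over $\Delta_f(\lambda)$, is exactly Proposition \ref{prop:discreteprob}; the M\"obius splitting is the opening move of Proposition \ref{coprimesum}; and the sum-versus-integral comparison on a convex subdomain of ${\cal R}$ (boundary cells controlled by convexity, improper tail controlled by the decay $\omega=O(1/y^2)$) is Proposition \ref{sumintegral}. Up to the final assembly of the error term, your argument is correct and coincides with the paper's.

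The gap is the one you flag yourself, and you do not close it: after M\"obius inversion you rescale the $d$-th term into $\Sigma_{n/d}$, a mesh-$(d/n)$ Riemann sum of the \emph{fixed} function $\omega$ on the \emph{fixed} domain $\Delta_f(\lambda)$, for which the comparison argument can only give a per-term error $O(d/n)$, hence a total of $O((\log n)/n)$; you then only sketch two possible remedies (M\"obius cancellation, or a closed-form coprimality extraction). The paper's resolution uses neither: it does \emph{not} rescale the mesh. It writes the $d$-th term as $R_n(g_d,\Omega_d)$ with $g_d(x,y)=\omega(dx,dy)$ and $\Omega_d=\frac1d\Delta_f(\lambda)$, keeping the step equal to $1/n$, and exploits the fact that the error bound of Proposition \ref{sumintegral} is proportional to the size $M_{g_d}(\Omega_d)$ of the integrand rather than to the mesh alone; homogeneity gives $g_d=d^{-2}\omega$, so the $d$-th error is $O(n^{-1}d^{-2})$, and the series over $d$ converges to give $O(1/n)$ with no cancellation in $\mu$ required. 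You should be aware, however, that your instinct about this being the delicate point is sound: the paper's inequality $M_{g_d}(\Omega_d)\le d^{-2}M_\omega({\cal R})$ rests on the assertion that $g_d$ vanishes off ${\cal R}$, whereas its support is $\frac1d{\cal R}$ and includes the band $1/d<y\le 1$ where $\omega(d\,\cdot,d\,\cdot)$ is \emph{not} uniformly $O(d^{-2})$; that band is exactly where your $O(d/n)$ loss originates. So to turn your sketch into a complete proof you must either carry out one of your two proposed remedies in detail, or redo the per-$d$ comparison row by row so that the decay of $\omega$ along each horizontal slice, and not just convexity, enters the boundary estimate.
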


\begin{figure*}
\centering
\begin{subfigure}[b]{0.48\textwidth}
                \includegraphics[width=\textwidth]{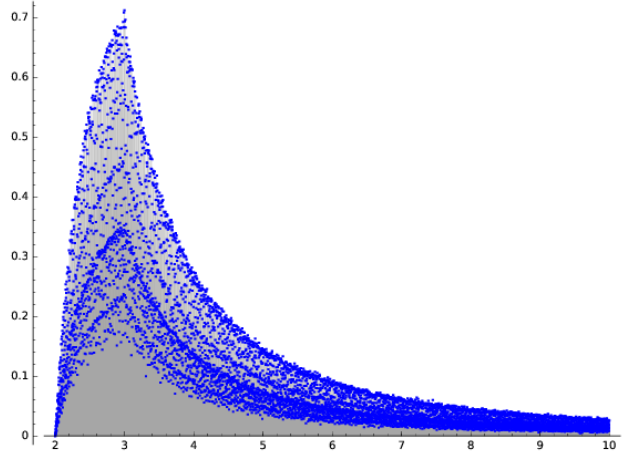}
                \caption{An experimental histogram for $S(\alpha,n)$ with step $\epsilon(n)=1/n$.}
                \label{fig:limdensitymun1}
        \end{subfigure}
\quad
\begin{subfigure}[b]{0.48\textwidth}
                \includegraphics[width=\textwidth]{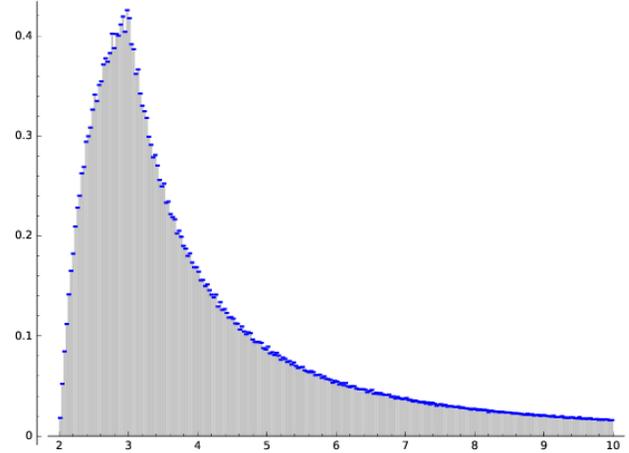}
                \caption{An experimental histogram for $S(\alpha,n)$ with step $\epsilon(n)=1/\lceil\sqrt{n}\rceil$.}
                \label{fig:limdensitys1}
        \end{subfigure}
        \caption{Limiting densities for  the sequence $n \mapsto S(\alpha, n)$ as estimated by the scaled histograms. The number of experiments is $M=10^7$, while $n=1000$. The histograms have been scaled so that they integrate to $1$.} 
\end{figure*}

\subsection{General results - densities.}
For the densities, we deal 
with boundary curves
$  \{(x, y) \mid 
f(x, y) =  \lambda\} $ and their intersection with ${\cal R}$.
We prove the following:

\begin{theorem} \label{thm2}
 Consider a ${\cal Q}$-function associated with a function $f$  which is written as  in \eqref{coeff}.  Then, 

\smallskip 
 $(a)$   The function  $\lambda\mapsto  I_f(\lambda)$ and its derivative $J_f$  
exist for any $\lambda$. The derivative   $J'_f $  exists except  perhaps on a finite set,  which contains the point $b_1/b_2$  and two possible other values $\lambda_0$ and $\lambda_1$.  The following holds: 

\begin{itemize}

\item[$(i)$] At  each of the points $ \lambda = \lambda_i$,  the function $J_f$ admits a left and a right derivative, each of them being finite. 

\item[$(ii)$] When the determinant $r(a, b) := a_1 b_2 -a_2 b_1$ is zero, the derivative $J'_f$ exists at $\lambda = b_1/b_2$. \
\item[$(iii)$] When the determinant $r(a, b) := a_1 b_2 -a_2 b_1$ is not zero, the derivative  $J'_f$ does not exist at $b_1/b_2$ and is $O(|b_2 \lambda- b_1|^{-1})$  for $\lambda \to b_1/b_2$.
\end{itemize}

\smallskip 
 $(b)$  
 For any  stricly positive sequence $n \mapsto \epsilon (n)$ which tends  to 0   with $n \epsilon(n)  \to \infty$,  the  secants of the distribution $F_n$    with  step $\epsilon(n)$ converge to  $
J_f(\lambda)$ and the following holds  
\begin{equation}
 \frac{F_n(\lambda+\epsilon(n))-F_n(\lambda)}{\epsilon(n)} = \frac{12}{\pi^2}  J_f(\lambda) +  E ( \lambda, \epsilon(n)) \, , 
\end{equation} 

$(c)$ The error term satisfies
$$
 E(\lambda, \epsilon(n) ) = O\left(\frac{1}{\epsilon(n) n}\right)  + O\Big( | J_f'(\lambda)| {\epsilon(n)}\Big)\,,
$$
 and  the constants in the $O$-term  do not depend on the pair  $(f, \lambda)$. 
 \end{theorem}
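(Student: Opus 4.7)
The plan is to reduce the whole statement to a deterministic analytic study of $\lambda\mapsto I_f(\lambda)$, exploiting the fact that the level sets $\{f=\lambda\}\cap{\cal R}$ are straight line segments, since \eqref{coeff} rewrites as the affine equation
$$(a_1-\lambda a_2)\,x+(b_1-\lambda b_2)\,y+(c_1-\lambda c_2)=0.$$
First, subtracting the two estimates of Theorem \ref{thm1} at the endpoints $\lambda$ and $\lambda+\epsilon(n)$ gives
$$\frac{F_n(\lambda+\epsilon(n))-F_n(\lambda)}{\epsilon(n)} =\frac{12}{\pi^2}\cdot\frac{I_f(\lambda+\epsilon(n))-I_f(\lambda)}{\epsilon(n)}+O\!\left(\frac{1}{n\,\epsilon(n)}\right),$$
with the same universal constant as in Theorem \ref{thm1}. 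This already accounts for the first term in the $O$-estimate of part $(c)$, and leaves only the regularity of $I_f$ to study.

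Next, I would compute $J_f=I_f'$ explicitly. Parameterising the segment $\{f=\lambda\}\cap{\cal R}$ by $x$ (or by $y$, depending on the slope), and differentiating under the integral sign using the coarea formula
$$J_f(\lambda)=\int_{\{f=\lambda\}\cap{\cal R}} \frac{\omega(x,y)}{|\nabla f(x,y)|}\,d\sigma,$$
one reduces $J_f$ to a one-dimensional integral of a rational function over a segment whose endpoints are the intersections of the level line with the sides of ${\cal R}$. The integrand's behaviour near the potentially unbounded part $y\to\infty$ is harmless because $\omega(x,y)=O(y^{-2})$, which also justifies the coarea differentiation.

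Third, I would locate the three exceptional points. The candidates $\lambda_0,\lambda_1$ are the values where the level line passes through a corner of ${\cal R}$, so the labelling of the endpoints of integration switches from one side of ${\cal R}$ to another; on each side, $\lambda\mapsto J_f(\lambda)$ is a smooth rational function, giving item $(i)$ via finite one-sided derivatives. The third candidate $\lambda=b_1/b_2$ is where the coefficient of $y$ in the level line vanishes, so the line is vertical. When $r(a,b)=0$ one has $a_1/a_2=b_1/b_2$, the coefficients of $x$ and $y$ vanish proportionally, and after simplification the family of lines varies smoothly through $b_1/b_2$, giving item $(ii)$. When $r(a,b)\neq 0$ the vertical line persists as $\lambda\to b_1/b_2$ while $|\nabla f|$ degenerates at rate $|b_2\lambda-b_1|$, producing the $O(|b_2\lambda-b_1|^{-1})$ bound of item $(iii)$ by direct inspection of the explicit rational formula for $J_f'$.

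Finally, part $(b)$ is a mean-value statement: since $J_f$ is piecewise $C^1$ with the stated integrable singularity,
$$\frac{I_f(\lambda+\epsilon)-I_f(\lambda)}{\epsilon}=J_f(\lambda)+O\bigl(|J_f'(\lambda)|\,\epsilon\bigr),$$
uniformly in $(f,\lambda)$, which combined with the first reduction yields $(b)$ and the second $O$-term in $(c)$. The main obstacle I expect is the bookkeeping at $\lambda=b_1/b_2$ when $r(a,b)\neq 0$: one must check that the $|b_2\lambda-b_1|^{-1}$ singularity of $J_f'$ is truly integrable against $\epsilon(n)$ in the secant estimate (so that the hypothesis $n\epsilon(n)\to\infty$ suffices), and that the $O$-constants in $(b)$ and $(c)$ can be made independent of the particular function $f$ by extracting uniform bounds from the linear coefficients of $f$.
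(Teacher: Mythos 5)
Your argument is essentially the paper's own: parts $(b)$--$(c)$ are proved there by exactly your decomposition (two applications of Theorem \ref{thm1} yielding the $O(1/(n\epsilon(n)))$ term via the triangle inequality, then a second-order Taylor expansion of $F_\infty$ yielding the $O(|J_f'(\lambda)|\,\epsilon(n))$ term), and part $(a)$ is proved in the Annex by the same differentiation under the integral sign with moving endpoints, with the same exceptional values — the corner values $\lambda_0=f(0,1)$ and $\lambda_1=f(1,1)$, and the value $b_1/b_2$ attained as $y\to\infty$, split according to whether $r(a,b)=a_1b_2-a_2b_1$ vanishes. The only cosmetic difference is that the paper reparametrizes the pencil of level lines by the slope $\tau$ through the basic point $(x_0,y_0)$ and studies $L=I_f\circ F$ instead of differentiating directly in $\lambda$ along the level segment as you do; also, your one-sided regularity claim for $J_f$ should read ``piecewise analytic'' rather than ``rational'' (the explicit densities contain logarithms), but this does not affect the conclusion.
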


\subsection {Return to the parameters under study.}
We now apply the  previous two results to the quotient recurrence $S$ together with the three geometric parameters.  Figure 2 exhibits  the  limit densities, whereas Figure 3  focuses on the recurrence quotient  and compares scaled experimental histograms   to  the limit density. 

\smallskip
Our theorems entail  the following  estimates for the recurrence quotient 
 $$ \lim_{n \to \infty} {\mathbb P}  \left[  S_n \in [2, 3]\right] = \frac{6}{\pi^2} (\log 2)^2\, ,$$
  $$\hbox{for $ b \ge 2$} \quad  \lim_{n \to \infty} {\mathbb P} \left[  S_n \ge   b+1\right]  = \frac {12}{\pi^2}{\rm Li}_2\left( \frac 1 b\right)\, , $$
  and involves the dilogarithm 
  $  \displaystyle {\rm Li}_2 (x) := \sum_{k \ge 1} \frac {x^k}{k^2}\, .$

\subsection{Conditional expectations.}
We now focus on the  position parameters  $\rho$, $\nu$ and $\mu$ defined in \eqref{rho} and  \eqref{munu}, and consider the  three sequences
$$\rho_n (\alpha) := \rho(\alpha, n), \ \ \nu_n(\alpha)  :=\nu(\alpha, n), \ \  \mu_n(\alpha)  :=\mu(\alpha, n)\, . $$ We have explained that the largest values of the recurrence quotient arise when $\nu$ or $\mu$ are small.  In particular, the event $[\nu_n \ge \epsilon(n)]$ gathers the reals $\alpha$ for which the integer $n$ is not too close of the left end of the interval $[q_{k-1} (\alpha), q_k(\alpha)[$,  and, at the same time, the length of the interval $[q_{k-1} (\alpha), q_k(\alpha)[$ is of the  same order  as  the right end $q_k(\alpha)$.
We then consider a sequence $\epsilon(n) \to 0$, and  condition with  one of the events 
$$ [\rho_n \ge \epsilon(n)], \ \  [\nu_n \ge \epsilon (n)], \ \  [\mu_n \ge \epsilon(n)]\, .$$
  
\begin{theorem} \label{thm3}
Consider   a parameter $\Gamma \in \{ \rho, \mu, \nu\} $ defined in \eqref{rho} and \eqref{munu}.  
Then the conditional expectation  
of the recurrence quotient  $S_n$  with respect to  the event $[\Gamma_n \ge \epsilon(n)]$ 
satisfies 
$$ 
{\mathbb{E}}\left[  S_n \Big| \Gamma_n\ge   \frac 1 n \right]   =  \frac {12}{\pi^2} \log n + O(1) \, .
$$ 
\end{theorem}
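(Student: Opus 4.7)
My plan is to decompose
\[
\mathbb{E}\!\left[S_n \,\Big|\, \Gamma_n \ge \tfrac{1}{n}\right] = \frac{\mathbb{E}[S_n \cdot \mathbf{1}_{A_n}]}{\mathbb{P}[A_n]}, \qquad A_n := \{\Gamma_n \ge 1/n\},
\]
and estimate numerator and denominator separately by comparison with the density $\psi$ of Theorem \ref{thm1}. For the denominator, the complementary event $\{\Gamma_n < 1/n\}$ pulls back under $(x,y) = (q_{k-1}/n, q_k/n)$ to a thin region $\{f_\Gamma(x,y) < 1/n\} \cap \mathcal{R}$, and a direct computation shows its $\psi$-mass is $o(1)$ in each of the three cases; combined with the uniform $O(1/n)$ accuracy of Theorem \ref{thm1}, this gives $\mathbb{P}[A_n] = 1 + o(1)$.

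The main work is on the numerator. I would use the tail formula
\[
\mathbb{E}[S_n \cdot \mathbf{1}_{A_n}] = 2\,\mathbb{P}[A_n] + \int_{2}^{n+2}\mathbb{P}[S_n \ge \lambda,\, A_n]\, d\lambda,
\]
the upper limit $n+2$ being valid because on $A_n$ the bound $S_n = 1+x+y \le n+2$ holds (in each of the three cases, $f_\Gamma \ge 1/n$ forces $y \le n$). The joint event inside the integrand pulls back to the intersection of two convex half-space constraints in $\mathcal{R}$ (the set $\{x+y \ge \lambda-1\}$ together with $\{f_\Gamma \ge 1/n\}$). A direct inspection of the coprime-Riemann-sum argument behind Theorem \ref{thm1} shows that its uniform $O(1/n)$ approximation carries over to such intersections, giving
\[
\mathbb{P}[S_n \ge \lambda,\,A_n] = \frac{12}{\pi^2}\!\iint_{\{f_S\ge\lambda\}\cap D_\Gamma(n)}\!\psi\, dx\, dy + O\!\left(\tfrac{1}{n}\right)
\]
uniformly in $\lambda$, where $D_\Gamma(n) := \{f_\Gamma \ge 1/n\}\cap\mathcal{R}$. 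Integrating the error over the $\lambda$-range of length $n$ contributes $O(1)$, and Fubini then yields
\[
\mathbb{E}[S_n \cdot \mathbf{1}_{A_n}] = \frac{12}{\pi^2}\iint_{D_\Gamma(n)}(1+x+y)\,\psi(x,y)\, dx\, dy + O(1).
\]

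The final step is the explicit evaluation using the decomposition
\[
\frac{1+x+y}{y(x+y)} = \frac{1}{y} + \frac{1}{y(x+y)}.
\]
The second term is absolutely integrable on all of $\mathcal{R}$ (with total mass $\pi^2/12$), so it contributes $O(1)$ on any subdomain. The first term produces the logarithm: for $\Gamma = \nu$ the domain is $(0,1]\times[1,n]$ and $\iint dx\,dy/y = \log n$ exactly; for $\Gamma \in \{\rho,\mu\}$ the domain is a trapezoid (respectively $\{y/n \le x \le 1,\ 1\le y\le n\}$ and $\{0\le x\le 1,\ 1\le y\le n-(n-1)x\}$) on which an elementary computation produces $\log n + O(1)$. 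Dividing by the denominator $1 + o(1)$ gives the announced $(12/\pi^2)\log n + O(1)$. The main obstacle is the uniformity claim for the doubly-constrained events $\{f_S\ge\lambda\}\cap D_\Gamma(n)$, which is not formally part of the statement of Theorem \ref{thm1}: one must verify from the structure of its proof that the implicit constant in the $O(1/n)$ does not depend on either convex constraint, so that integrating that error over a $\lambda$-range of length $n$ still costs only $O(1)$.
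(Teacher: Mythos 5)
This is essentially the paper's own route: Theorem \ref{conditionalexpectations} is proved by exactly your tail-formula argument, writing the joint event as a coprime Riemann sum over the convex domain $\{f\ge\lambda\}\cap\{g\ge\epsilon\}\cap\mathcal R$ and invoking Proposition \ref{coprimesum}, whose error bound $\tfrac1n\left(1+5\zeta(\beta)\right)M_g(\mathcal R)$ depends only on the function (here $2\omega$) and not on the convex domain --- which settles the uniformity issue you flag as the remaining obstacle --- and your explicit evaluation via $\frac{1+x+y}{y(x+y)}=\frac1y+\frac1{y(x+y)}$ reproduces the values tabulated in Figure \ref{fig:boundss}. Two small blemishes: the prefactor in your display for $\mathbb P[S_n\ge\lambda,\,A_n]$ should multiply $\omega$ rather than $\psi=\frac{12}{\pi^2}\omega$ (your subsequent computation uses the correct normalization), and $\mathbb P[A_n]=1+o(1)$ is not quite enough to divide a numerator of size $\Theta(\log n)$ while keeping an $O(1)$ remainder --- you need, and the thin-region computation actually delivers, $1+O(1/n)$ as stated in the paper's proof of Theorem \ref{thm3gen}.
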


This result  exhibits a sequence of events,  for which the integer is not too close to the left ends of  interval $[q_{k-1}(\alpha), q_k(\alpha)[$.   When we are sure not to be too close to this left end, the  quotient of the recurrence quotient is  (on average) of order $\log n$. This can be viewed as a counterpart of Proposition \ref{Morse}. We return to this study  at the end of Section 4. 

\section{Proofs of Theorems \ref{thm1} and \ref{thm2}.} \ \ 

We first  introduce the main objects of interest: continued fraction expansions and coprime Riemann sums.  Then, we prove  the existence of  limit distribution and limit densities for a general ${\cal Q}$ function.  The proof of Theorem \ref{thm1} has three main steps, described in Sections  3.3, 3.4, and 3.5, and we conclude the proof of Theorem \ref{thm1} in Section 3.6.  Section 3.6 
is devoted to the proof of Theorem \ref{thm2}. 

\subsection{Continued fractions, fundamental  intervals and continuants.} 
(See here \cite {hardy} for more details). The continued fraction of  an irrational number $\alpha$ of the unit interval $[0, 1]$ is 
 $$
\alpha = \cfrac{1}{m_1+\cfrac{1}{m_{2} + \cfrac{1}{\ddots +  \cfrac 1 {m_k+  \cfrac {1}{\ddots}} }}} \, .$$
Truncated at depth $k$, it gives rise to a rational number $p_k/q_k$  associated with  a coprime  integer pair $(p_k, q_k)$. The  numerator $p_k = p_k(\alpha)$ and the denominator $q_k = q_k(\alpha)$ are uniquely defined by the irrational number $\alpha$. All the irrational numbers $\alpha$ which begin  with the same sequence $\m  = (m_1, m_2, \ldots, m_k) \in {\mathbb N}^k$ belong to  an interval, called a fundamental interval of depth $k$ and  denoted  by $I_k(\m)$.   As the irrational numbers   of  $I_k(\m)$ have the same convergents of order $\ell \le k$, we denote their numerator and denominator by $p_\ell(\m), q_\ell(\m)$.  The ends of the  interval $I_k(\m)$  are
$$ \frac {p_k(\m)} {q_k(\m)}, \ \  \frac {p_k(\m)+ p_{k-1} (\m) } {q_k(\m)+q_{k-1}(\m)}.$$
As the equality $|p_k(\m) q_{k-1}(\m) -  p_{k-1}(\m) q_{k}(\m)| = 1 $ holds, the length of the fundamental interval  involves the function $\omega$ defined in \eqref{omega} under the form
\begin{equation} \label{fund-int}
 |I_k(\m)| = \omega (q_{k-1}(\m), q_k(\m)) \, .
 \end{equation}
 This explains why the   function $\omega$  defined in \eqref{omega}  and the associated density $\psi$ are  ubiquitous in the study of  the ${\cal Q}$-functions.

\subsection{Distributions. Strategy of the proof.}
There are  two main steps in the proofs   of Theorem \ref{thm1}.

\begin{itemize}
\item[$(i)$]  {\em Discrete step.} We express in Proposition \ref{prop:discreteprob} the  distribution of   a ${\cal Q}$ function
in terms of a  variant of a Riemann sum, that is called in the following a ``coprime'' Riemann sum.  This type of  ``constrained'' Riemann sum  was already  considered in \cite{boca}.
\item[$(ii)$] {\em Continuous step.} We compare the  ``coprime'' Riemann sum to the associated integral. We begin  by the comparison of the ``plain'' Riemann sum  to the integral in Proposition  \ref{sumintegral}, then,  we take into account the coprimality condition in Proposition \ref{coprimesum}.  We extend here the results of \cite{boca} which are only proven  for finite domains. 
\end{itemize}

\subsection{Distributions and Riemann sums.}
We begin with the alternative  expression  of a ${\cal Q}$-function $\Lambda$, associated with $f$,  (already defined in \eqref{Lambda}),  which is  written with  the  Iverson bracket\footnote{The Iverson bracket is  a Boolean function defined by $[\!\![ {\cal P}]\!\!] = 1$ as soon as Property ${\cal P}$ is true} under the form
$$\Lambda(\alpha, n) $$
$$=  \sum_{k \ge 0} f\left(\frac {q_{k-1} (\alpha) }{n},  \frac {q_{k} (\alpha) }{n}\right)  \Big[\!\!\Big[ n \in [q_{k-1} (\alpha), q_k(\alpha)[\Big]\!\!\Big] \, .
$$
The distribution of a ${\cal Q}$-function associated with $f$ is 
$${\mathbb P} [\Lambda_n \le  \lambda] = \int_0^1  d\alpha \sum_{k \ge 0} \left[\!\!\left[ \left(\frac {q_{k-1} (\alpha)} {n} , \frac {q_k(\alpha) } {n}\right)  \in  \Delta_f(\lambda)\right]\!\!\right] \, .$$
For each $k$, the  family of  fundamental intervals $I_k(\m)$  defines a pseudo-partition when $\m$  goes through   ${\mathbb N}^k$,  
and,  for any $\alpha  \in I_k(\m)$,  the equality $q_k (\alpha)= q_k(\m)$ holds.    We deduce
$${\mathbb P} [\Lambda_n \le  \lambda] $$
$$ =  \sum_{k=0}^\infty \sum_{\m \in {\mathbb N}^k}  |I_k(\m)|  \left[\!\!\left[ \left(\frac {q_{k-1} (\m)} {n} , \frac {q_k(\m) } {n}\right)  \in  \Delta_f(\lambda)\right]\!\!\right]\, . $$
Then,  with the expression of the length $|I_k(\m)|$ in terms of the function $\omega$ given in \eqref{fund-int} and the fact that $\omega$ is homogeneous of degree -2, we   obtain
$$ |I_k(\m)| =  \frac 1 {n^2}\   \omega\left( \frac {q_{k-1} (\m)} {n} , \frac {q_{k} (\m)} {n} \right).$$
Now, as we go through all the sequences $\m \in {\mathbb N}^\star $, the  coprime pairs  $(q_{k-1}(\m), {q_k}(\m))$  give rise to  all the coprime pairs $(a, b)$. Moreover, each  coprime pair  $(a, b)$, except  the pair $(1, 1)$,  appears exactly twice,  due to the   existence of two  continued fraction expansions,   the proper one (in which the last digits strictly greater than $1$), and the improper one (in which the last digit is equal to $1$). Then,  the equality holds
$${\mathbb P} [\Lambda_n \le  \lambda] 
  =  \frac 2 {n^2} \sum_{(a,b) \in  \mathbb{Z}^2\atop{(a, b) = 1}} \omega\left(\frac{a}{n},\frac{b}{n}\right)  \left[\!\!\left[ \left(\frac  a  {n} , \frac {b} {n}\right)  \in  \Delta_f(\lambda)\right]\!\!\right]\, .$$
  The right member is  the Riemann sum
  of the function $2\omega$ on the domain $\Delta_f(\lambda)$ with step $1/n$, 
  with an  extra condition of coprimality. 
More generally, for a function $g$ integrable on a subset $\Omega$, we  are led  to the following  two Riemann sums with step $1/n$:   the first one  $R_n(g, \Omega)$ is the usual one, 
\begin{equation*}
 R_n\left( g,  \Omega\right)  = \frac 1 {n^2} \sum_{(a,b) \in  \mathbb{Z}^2} g\left(\frac{a}{n},\frac{b}{n}\right)  \left[\!\!\left[ \left(\frac  a  {n} , \frac {b} {n}\right)  \in \Omega\right]\!\!\right],
\end{equation*}
 whereas   the second one  $\hat R_n(g, \Omega)$ takes into account the coprimality of $(a, b)$, and is called the ``coprime'' Riemann sum, \begin{equation*}
\hat R_n\left( g,  \Omega\right)  = \frac 1 {n^2} \sum_{(a,b) \in  \mathbb{Z}^2\atop{ \gcd(a, b) = 1}} g\left(\frac{a}{n},\frac{b}{n}\right)  \left[\!\!\left[ \left(\frac  a  {n} , \frac {b} {n}\right)  \in \Omega\right]\!\!\right],
\end{equation*}
We summarize:

\begin{proposition}
\label{prop:discreteprob}
Consider a ${\cal Q}$-function $\Lambda$ associated with a function $f$. 
Then  the  distribution  $F_n( \lambda) := {\mathbb{P}}\left[\Lambda_n \leq \lambda\right] $ is  expressed with  a coprime Riemann sum, 
\begin{equation}
{\mathbb P} [\Lambda_n \le  \lambda]  =   \hat R_n\left( 2 \omega,  \Delta_f(\lambda)\right)\, .
\end{equation}
which involves the density $\omega$ defined in \eqref{omega} and the domain $\Delta_f(\lambda)$ defined in  \eqref{Deltaf}.
\end{proposition}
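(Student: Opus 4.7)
The plan is to express $F_n(\lambda)$ as an integral over $\alpha$, decompose the integration domain along the fundamental-interval pseudo-partition of $[0,1]$ (on which continuants are constant), and then reindex the resulting double sum by coprime pairs, picking up a factor of $2$ from the double CF representation of rationals. I would start from the Iverson-bracket rewriting of \eqref{Lambda},
$$\Lambda(\alpha, n) = \sum_{k \geq 0} f\!\left(\frac{q_{k-1}(\alpha)}{n}, \frac{q_k(\alpha)}{n}\right) \left[\!\!\left[ n \in [q_{k-1}(\alpha), q_k(\alpha)[ \right]\!\!\right],$$
which, using the fact that $\Delta_f(\lambda) \subset \mathcal{R}$ already encodes $q_{k-1}/n \leq 1 < q_k/n$, gives
$$F_n(\lambda) = \int_0^1 \sum_{k \geq 0} \left[\!\!\left[ \left(\frac{q_{k-1}(\alpha)}{n}, \frac{q_k(\alpha)}{n}\right) \in \Delta_f(\lambda) \right]\!\!\right] d\alpha.$$

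Next I would split the integral along the pseudo-partition $\{I_k(\m) : \m \in \mathbb{N}^k\}$ at each depth $k$, on which $q_{k-1}$ and $q_k$ are constant. Using the length formula \eqref{fund-int} and the fact that $\omega$ is homogeneous of degree $-2$, the measure rewrites as $|I_k(\m)| = n^{-2}\, \omega(q_{k-1}(\m)/n, q_k(\m)/n)$, so the integral becomes a double sum $\sum_k \sum_\m$ of Riemann-type summands evaluated at the rescaled continuant pairs $(q_{k-1}(\m)/n, q_k(\m)/n)$.

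The decisive combinatorial step is then to reindex the sum by the coprime pair $(a,b) := (q_{k-1}(\m), q_k(\m))$. Standard continued-fraction theory says that as $(k, \m)$ ranges over all admissible digit sequences, $(a,b)$ ranges over every coprime pair with $0 < a \leq b$, and the map is exactly $2$-to-$1$: every pair comes both from a proper expansion (last digit $\geq 2$) and from an improper one (last digit $= 1$). This is precisely where the factor of $2$ in the density $2\omega$ of the target coprime Riemann sum enters. Combining the homogeneity rescaling with this multiplicity identifies the double sum with $\hat{R}_n(2\omega, \Delta_f(\lambda))$.

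The main obstacle is the bijection bookkeeping: verifying that the continuant map is indeed $2$-to-$1$ onto coprime pairs, and handling the solitary exceptional pair $(1,1)$, whose single contribution is $O(n^{-2})$ and does not affect the stated equality once the Riemann sum is normalised. Once this correspondence is in place the rest is direct substitution — the factor $2$ folds into the density, the coprimality of $(a,b)$ under the sum in $\hat{R}_n$ is exactly the one inherited from consecutive continuants, and the indicator $[\![(a/n, b/n) \in \Delta_f(\lambda)]\!]$ cuts out precisely the admissible lattice points.
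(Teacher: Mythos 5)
Your proposal follows the paper's own proof essentially step for step: the Iverson-bracket rewriting of $\Lambda(\alpha,n)$, the decomposition of the integral over the fundamental-interval pseudo-partition on which the continuants are constant, the length formula $|I_k|=\omega(q_{k-1},q_k)$ combined with the degree $-2$ homogeneity of $\omega$, and the two-to-one reindexing by coprime pairs (proper versus improper expansions) that yields the factor $2$ in $2\omega$. The argument is correct, and even your remark about the exceptional pair $(1,1)$ matches the aside the paper itself makes.
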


The previous result extends  if we  replace $\Delta_f(\lambda)$ by any other domain $\Omega\subset {\cal R}$.  In particular, in Section  4, we  will  deal with  two ${\cal Q}$-functions $\Lambda$ and $\Gamma$ associated respectively to $f$ and $g$,  together with  the domain 
\begin{equation}  \label {Deltafg}
\und \Delta_{f,g}(\lambda,\epsilon):=\{(x,y)\in \mathcal{R}\mid  f(x,y) \geq \lambda, g(x,y) \geq \epsilon\} \, , 
\end{equation}
and use 
 the equality 
\begin{equation} \label{LambdaGamma} 
{\mathbb{P}}[\Lambda_n\geq \lambda, \Gamma_n \geq \epsilon] = \hat R_n\left( 2 \omega,  \und \Delta_{f,g}(\lambda,\epsilon)\right)\, . 
 \end{equation}

\subsection{Usual Riemann sums and integrals.}
We first deal with the usual Riemann sum, and compare it to its associated integral $I(g, \Omega)$. This is a classical proof, but we consider improper integrals and we wish to have precise error terms.

 We  now deal (only within this subsection) with  
 \begin {equation} 
 \label {S}{\cal S} :=   [0,1]\times (0,\infty) \, ,  
 \end{equation}
consider a subset $\Omega \subset {\cal S}$
and associate  with  it the   family of subsets 
$$\Omega(k) := \Omega \cap \left([0, 1]\times [k,  k+1] \right), $$ 
 for any $k \ge 1$, which form a pseudo-partition of $\Omega$.  
 We also consider  a  positive function $g$ defined on 
$\Omega$ of class ${\cal C}^1$, bounded on any bounded subset on $\Omega$ 
for which the  following   two finite  bounds\footnote{By convention,  we  consider that $C_g(\Omega,k)$ and $D_g(\Omega,k)$ are $0$ if the set $\Omega(k)$ is empty.}
    $$ C_g(\Omega, k):= \sup \{  g(x, y) \mid (x, y) \in \Omega(k) \} , $$
$$ D_g(\Omega, k) := \sup\left\{  \left| \frac {\partial g}{\partial y} (x, y) \right|  \mid (x, y)  \in \Omega(k) \right\}\, , $$
define 
sequences  
whose associated series are convergent, 
$$ \sum_{k \ge 0} C_g(\Omega, k) < \infty, \qquad\sum_{k \ge 0} D_g(\Omega, k) < \infty\, . $$
Their sums are denoted by $C_g(\Omega)$ and $D_g(\Omega)$, and  we denote by $M_g(\Omega)$ their maximum. 

\medskip
Such a function $g$ is called strongly decreasing on $\Omega$  with bound $M_g(\Omega)$. Such a function  is integrable on $\Omega$ and the inequality $I(g, \Omega) \le M_g(\Omega)$ holds.

\begin{proposition} 
\label{sumintegral}
 Consider  the domain ${\cal S}$ defined in \eqref{S} and  a function $g$ which is strongly decreasing on a convex $\Omega\subset{\cal S}$ with bound $M_g(\Omega)$.  
 Then,  the Riemann  sum of the function $g$ on $\Omega$   compares to the integral, 
\begin{equation}
 \left|R_n (g , \Omega) - I (g, \Omega) \right|   \le    \frac 5 n  M_g(\Omega)\, . \end{equation}
\end{proposition}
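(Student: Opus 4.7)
The plan is to cut $\Omega$ into the horizontal strips $\Omega(k)$ and compare $R_n(g,\Omega(k))$ to $I(g,\Omega(k))$ strip by strip, then sum over $k$. The hypotheses $\sum_k C_g(\Omega,k)<\infty$ and $\sum_k D_g(\Omega,k)<\infty$ make both sides of the inequality absolutely convergent and justify the decompositions $I(g,\Omega)=\sum_k I(g,\Omega(k))$ and $R_n(g,\Omega)=\sum_k R_n(g,\Omega(k))$, so it suffices to prove a per-strip estimate of the form $|R_n(g,\Omega(k))-I(g,\Omega(k))|\leq(\mathrm{const}/n)(C_g(\Omega,k)+D_g(\Omega,k))$.

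Within one strip (which sits inside a $1\times 1$ box) I would do a two-stage 1D reduction. Fix a column index $a$: by convexity of $\Omega(k)$, the vertical slice at $x=a/n$ is an interval $[c_k(a/n),d_k(a/n)]\subset[k,k+1]$. A standard 1D Riemann-sum bound in $y$ applied to $g(a/n,\cdot)$, using $|\partial_y g|\leq D_g(\Omega,k)$ in the interior and $g\leq C_g(\Omega,k)$ at the at most two endpoint cells of width $1/n$, gives an error of order $D_g(\Omega,k)(d_k(a/n)-c_k(a/n))/n + C_g(\Omega,k)/n$. Averaging over $a\in\{0,\dots,n\}$ and using $\tfrac1n\sum_a(d_k(a/n)-c_k(a/n))\leq 1$ produces the first contribution $O((C_g(\Omega,k)+D_g(\Omega,k))/n)$ per strip.

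The second stage is to compare $\tfrac1n\sum_a F_k(a/n)$ with $\int_0^1 F_k(x)\,dx$, where $F_k(x):=\int_{c_k(x)}^{d_k(x)}g(x,y)\,dy$. Since $\partial_x g$ is not controlled pointwise, a naive 1D Riemann-sum bound in $x$ is unavailable, and I would instead use a boundary/interior dichotomy on the grid cells. The convexity of $\Omega(k)$ bounds its perimeter inside the $1\times 1$ box by $4$, so the number of cells of side $1/n$ that meet $\partial\Omega(k)$ is $O(n)$; each such boundary cell contributes at most $C_g(\Omega,k)/n^2$ to $|R_n-I|$, for a total of $O(C_g(\Omega,k)/n)$. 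The remaining interior cells contribute a difference that telescopes against the $y$-Riemann step function already controlled in the first stage, adding no further error in $x$.

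Summing both contributions over $k\geq 0$ and invoking the summability of $C_g$ and $D_g$, I obtain $|R_n(g,\Omega)-I(g,\Omega)|\leq(\mathrm{const}/n)\,M_g(\Omega)$. Careful bookkeeping of the small constants --- from the 1D trapezoidal-type inequality, from the at most two endpoint cells per slice, and from the perimeter-based count of boundary cells --- should pin the constant at $5$ as stated. The principal obstacle is the asymmetry of the hypotheses: only $\partial_y g$ is controlled pointwise, so the Riemann-sum error in the $x$-direction must be handled indirectly, and the convexity of $\Omega$ enters precisely to make the boundary-cell count manageable.
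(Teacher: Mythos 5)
Your overall architecture coincides with the paper's: decompose $\Omega$ into the unit-height strips $\Omega(k)$, use the summability of $C_g(\Omega,k)$ and $D_g(\Omega,k)$ to reduce to a per-strip bound, count the $O(n)$ grid cells meeting the boundary via convexity and charge each of them $C_g(\Omega,k)/n^2$, and control the interior via the derivative bound $D_g(\Omega,k)$. The paper does this in a single pass, comparing $\frac{1}{n^2}g(a/n,b/n)$ to $I(g,\mathcal{R}_{a,b})$ cell by cell; you instead insert the intermediate quantity $\frac1n\sum_a F_k(a/n)$ with $F_k(x)=\int_{c_k(x)}^{d_k(x)}g(x,y)\,dy$ and split the error into a $y$-stage and an $x$-stage.

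The gap is in your $x$-stage. On an interior column you must bound $\int_{a/n}^{(a+1)/n}\bigl|F_k(x)-F_k(a/n)\bigr|\,dx$, and once the variation of the limits $c_k,d_k$ is absorbed into the boundary-cell count you are left with $\int\bigl|g(x,y)-g(a/n,y)\bigr|\,dy$ for $x$ in the cell. Nothing in the stated hypotheses controls this increment: only $\partial_y g$ is bounded, and the phrase ``telescopes against the $y$-Riemann step function already controlled in the first stage'' is not an argument --- the first stage compared sums over $b$ to integrals in $y$ at \emph{fixed} abscissa $x=a/n$ and says nothing about how $g$ varies in $x$. A genuine telescoping argument is available, but only under an extra property you never invoke, namely monotonicity of $x\mapsto g(x,y)$: then $\sum_a\sup_x|g(x,y)-g(a/n,y)|\le g(0,y)-g(1,y)\le C_g(\Omega,k)$ over a whole row of cells, which yields the desired $C_g(\Omega,k)/n$. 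Alternatively one needs a pointwise bound on $\partial_x g$. (To be fair, the paper's own interior-cell estimate silently uses the same unstated control of the $x$-increment, so you are no further from rigor than the printed proof; but as written your second stage is an assertion, and it is exactly the step you yourself flag as the principal obstacle.) Secondarily, your bookkeeping does not obviously reach the constant $5$: two endpoint cells per column in stage one plus the perimeter count in stage two gives roughly $6C_g+D_g$ per strip rather than the $4C_g+D_g$ that produces $5\,M_g(\Omega)$, so you would need to check that those two families of cells are not being charged twice.
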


\begin{proof}
 We will prove   the estimate, for each $k\ge 0$, 
 $$\left| R_n (g , \Omega(k)) -I(g, \Omega(k)\right|  \le \frac 4 n \left( C_g(\Omega, k)+ D_g (\Omega, k)\right).$$
 This will entail the result by taking the sum over  $k \ge 0$. 

\medskip
\noindent  We consider the  elementary squares  of side $1/n$, namely $$\mathcal{R}_{a,b}=\left[\frac{a}{n},\frac{a+1}{n}\right] \times \left[\frac{b}{n},\frac{b+1}{n}\right], $$ and we concentrate on  those which meet $\Omega(k)$.  There are two cases for  such  rectangles 
${\cal R}_{a, b}$, namely
$$ (i) \ \ \mathcal{R}_{a,b}\subset \Omega(k), \qquad \hbox{or}\quad (ii) \ \   \mathcal{R}_{a,b}\cap \Omega(k)^c \not  = \emptyset \, .$$
In the first case $(i)$,  the definition of the bound $D_g$ entails the estimate 
$$\left|\frac{1}{n^2} g\left(\frac{a}{n},\frac{b}{n}\right) - I(g , {\cal R}_{a, b})\right| $$
$$\leq  I\left(\left| g\left(\frac{a}{n},\frac{b}{n}\right)  - g \right|, {\cal R}_{a, b}\right) \le    \frac 1 {n^3} \, D_g(\Omega, k)\, .$$
As the number of such squares is at most $n^2$, 
the contribution from  case  $(i)$ 
is at most $(1/n)\,  D_g(\Omega, k)$.

\smallskip
In the second case $(ii)$,  the positivity of $g$ and the definition of the bound $C_g$ entails the estimate 
\begin{align*}
\left|\frac{1}{n^2} g\left(\frac{a}{n},\frac{b}{n}\right) -  I (g, \Omega\cap \mathcal{R}_{a,b} )\right| \leq &  \frac 1 {n^2} \, C_g(\Omega, k)\, .
\end{align*}
But, the convexity of $\Omega$  entails that there are at most $4 n$  such squares,  and the contribution of the  second case  is 
at most $(4/n)\,  C_g(\Omega,  k)$.  The constant 4 is explained in the Annex. \EOP

\end{proof}

\subsection{Coprime Riemann sums and integrals.}  The following result is an extension of the results obtained in \cite{boca}, that are only proven for finite domains.

\begin{proposition}  \label{coprimesum}
 Consider a  positive function $g$  defined on   ${\cal R}$,  homogeneous of degree $-\beta$ there with  $\beta>1$. Such a function is strictly decreasing  on ${\cal R}$. Consider also  a convex subset  $\Omega\subset \cal R$.  Then,  the  coprime Riemann  sum of the function $g$ on $\Omega$  compares to the integral of $g$ on $\Omega$, namely
$$
 \left|\hat R_n (g , \Omega) - \frac{6}{\pi^2} I(g, \Omega) \right|
   \le    \frac 1 n \left(  1 + 5\zeta( \beta) \right)  M_g({\cal R}) \, .
   $$
 \end{proposition}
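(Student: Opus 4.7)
The plan is to exploit M\"obius inversion together with Proposition \ref{sumintegral}. Starting from the identity $[\gcd(a,b) = 1] = \sum_{d \mid \gcd(a,b)} \mu(d)$ and substituting $(a, b) = (d a', d b')$, one obtains
$$\hat R_n(g, \Omega) = \sum_{d \ge 1} \mu(d)\, T_d,$$
where $T_d := (1/n^2) \sum_{(a', b')\in\mathbb{Z}^2} g(d a'/n, d b'/n)\, [(d a'/n, d b'/n) \in \Omega]$. Since $\Omega \subset \mathcal{R}$ forces $d a' \le n$, only the range $d \le n$ contributes, so the outer sum is finite.

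The decisive manoeuvre is to exploit the homogeneity of $g$ (degree $-\beta$): one extends $g$ by $g(x, y) = y^{-\beta} g(x/y, 1)$ to the cone $\{0 < x < y\}$ (which contains $(1/d)\Omega$), and recognises $T_d$ as a rescaled step-$1/n$ Riemann sum on a dilated domain,
$$T_d = d^{-\beta}\, R_n\bigl(g,\, (1/d)\Omega\bigr),$$
while the change of variables $(u, v) = (dx, dy)$ in the integral gives $I(g, (1/d)\Omega) = d^{\beta - 2} I(g, \Omega)$, i.e.\ $I(g, \Omega)/d^2 = d^{-\beta} I(g, (1/d)\Omega)$. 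Subtracting and applying Proposition \ref{sumintegral} to the rescaled Riemann sum yields
$$\Bigl| T_d - \frac{I(g, \Omega)}{d^2}\Bigr| \;\le\; \frac{5}{n\, d^{\beta}}\, M_g\bigl((1/d)\Omega\bigr).$$

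Assembling the pieces, the principal term is $\sum_{d \ge 1} \mu(d)/d^2 \cdot I(g, \Omega) = (6/\pi^2) I(g, \Omega)$; the truncation error from $d > n$ in $\sum \mu(d)/d^2$ contributes at most $I(g, \Omega)/n \le M_g(\mathcal{R})/n$, which accounts for the ``$1$'' in the constant $1 + 5\zeta(\beta)$. The cumulative error term becomes $(5/n) \sum_{d \ge 1} d^{-\beta}\, M_g((1/d)\Omega)$, and the main technical obstacle will be establishing the uniform bound $d^{-\beta}\, M_g((1/d)\Omega) \le M_g(\mathcal{R})$. This should follow from the observation that the integer-$k$ slab of $(1/d)\Omega$ corresponds via $(x, y) \mapsto (dx, dy)$ to a width-$d$ $y$-slice of $\Omega$, so that the homogeneity of $g$ and its $y^{-\beta}$ decay force the slab contributions $C_g((1/d)\Omega, k)$ and $D_g((1/d)\Omega, k)$ to be exactly $d^\beta$ times bounded quantities controlled by $M_g(\mathcal R)$; once this is verified, summation over $d$ produces $\sum_{d \ge 1} d^{-\beta} = \zeta(\beta)$, delivering the $5\zeta(\beta)$ contribution and completing the bound.
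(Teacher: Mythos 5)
Your proposal follows the paper's own route step for step: M\"obius inversion, the identity $\hat R_n(g,\Omega)=\sum_{d\le n}\mu(d)\,R_n(g_d,\Omega_d)$ with $\Omega_d=(1/d)\Omega$, a termwise application of Proposition \ref{sumintegral}, the change of variables giving $I(g_d,\Omega_d)=d^{-2}I(g,\Omega)$, the tail bound $|\sum_{d\le n}\mu(d)d^{-2}-6/\pi^2|\le 1/n$ combined with $I(g,\Omega)\le M_g({\cal R})$ (your ``$1$''), and finally $\sum_d d^{-\beta}=\zeta(\beta)$. Up to the last step this is exactly the argument in the paper.

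The gap is in the uniform bound $d^{-\beta}M_g((1/d)\Omega)\le M_g({\cal R})$, which you rightly identify as the crux but whose sketched justification does not close. For the sup part the scaling is as you say: $g$ is homogeneous of degree $-\beta$, a unit slab of $(1/d)\Omega$ pulls back to a width-$d$ slice of $\Omega$, and $d^{-\beta}\sum_k C_g((1/d)\Omega,k)=\sum_k\max_{0\le j<d}C_g(\Omega,dk+j)\le C_g(\Omega)$. But $\partial g/\partial y$ is homogeneous of degree $-\beta-1$, not $-\beta$: the same pullback gives $D_g((1/d)\Omega,k)=d^{\beta+1}\max_{0\le j<d}D_g(\Omega,dk+j)$, so only $d^{-\beta}D_g((1/d)\Omega)\le d\,D_g(\Omega)$ and an uncancelled factor $d$ survives; the slab contributions are \emph{not} ``exactly $d^\beta$ times'' bounded quantities. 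Concretely, with your homogeneous extension of $g$ to the cone $\{0<x<y\}$ the dilated domain $(1/d)\Omega$ reaches down to $y\approx 1/d$, where $|\partial g/\partial y|$ is of order $d^{\beta+1}$, so its lowest unit slab alone forces $d^{-\beta}M_g((1/d)\Omega)\gtrsim d$ and the cumulative error $(5/n)\sum_{d\le n}d^{-\beta}M_g((1/d)\Omega)$ is then of order $n$ rather than $1/n$. The paper avoids the pullback: it keeps the rescaled function $g_d(x,y)=g(dx,dy)$, uses $\partial_y g_d(x,y)=d\,\partial_y g(dx,dy)=d^{-\beta}\partial_y g(x,y)$ where both points lie in ${\cal R}$, extends $g$ by zero (not homogeneously) outside ${\cal R}$, and bounds $M_{g_d}(\Omega_d)$ by $d^{-\beta}M_g(\Omega_d\cap{\cal R})\le d^{-\beta}M_g({\cal R})$ slab by slab through plain domain monotonicity, so that no factor of $d$ appears. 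To repair your argument you must likewise suppress the low-lying part $y<1$ of $(1/d)\Omega$ (equivalently, compare slabs of $\Omega_d$ directly to the corresponding slabs of ${\cal R}$) instead of extending $g$ homogeneously into the region $y<1$.
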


\begin{proof}
To filter the cases in which $\gcd(a,b)>1$,  we  use the Mobius function $\mu$  which performs ``inclusion-exclusion''. The Mobius function  $\mu: {\mathbb N} \rightarrow \{-1, 0, +1\}$ satisfies 
\begin{equation}
\label{eq:mucondition}
\sum_{d|n} \mu(d) = \begin{cases} 	1 &\mbox{if } n=1 \\
0 & \mbox{if } n>1\,. \end{cases}
\end{equation}
We  consider the restricted  ``coprime'' Riemann sum, where  the sum is taken over the  pairs $(a, b)$ with $\gcd(a, b) = 1$, namely
$$
n^2 \, \hat { R}_n (g, \Omega)  =
 \sum_{(a,b) \in  \mathbb{Z}^2\atop{\gcd(a, b) = 1}} g\left(\frac{a}{n},\frac{b}{n}\right)  \left[\!\!\left[ \left(\frac  a  {n} , \frac {b} {n}\right)  \in \Omega \right]\!\!\right]\, .$$ 
We then ``insert'' the $\mu$-function  inside this restricted ``coprime'' Riemann sum, 
$$ n^2 \hat { R}_n (g, \Omega)  $$
$$  =  \sum_{(a,b)\in  \mathbb{Z}^2} g\left(\frac{a}{n},\frac{b}{n}\right)   \left[\!\!\left[ \left(\frac  a  {n} , \frac {b} {n}\right)  \in \Omega\right]\!\!\right]  \left(\sum_{d|\gcd(a,b)} \mu(d) \right).$$
As the point $(a/n, b/n)$ belongs to ${\cal R}$ with $a>0$,   the inequality $\gcd(a, b) \le n$ holds. Then, interverting the summations entails the equality  
$$  n^2 \, \hat {R}_n (g, \Omega) $$
$$=  \sum_{d\leq n} \mu(d)\sum_{(a,b )\in \mathbb{Z}^2} g\left(\frac{a d}{n},\frac{b d}{n}\right)  \left[\!\!\left[ \left(\frac  {ad}  {n} , \frac {bd} {n}\right)  \in \Omega\right]\!\!\right]. $$ Finally, the following equality holds
\begin{equation}  \hat {R}_n (g, \Omega)  = \sum_{d\leq n} \mu(d)\,  R_n(g_d, \Omega_d) \, ,   
\end{equation}
and involves the function $g_d$ and the subset $\Omega_d$ defined as
$$ g_d(x, y) := g(dx, dy), \qquad   \Omega_d= \frac  1 d {\Omega} \, .$$

\smallskip 
As  the inclusion  $\Omega_d\subset{ \cal S}$ holds, we now apply the previous Proposition \ref{sumintegral} to each  (plain) Riemann sum  $R_n(g_d, \Omega_d)$ and  obtain
\begin{equation} \label{deb}
 \left|R_n (g_d , \Omega_d) - I (g_d, \Omega_d) \right|   \le    \frac 5 n  M_{g_d}(\Omega_d) \, .\end{equation}
 We  now use three properties.  We  first remark   the equality
$$  I(g_d,  \Omega_d)  = \frac 1 {d^2}  I(g, \Omega)\, , $$
due to the change of variables $(x', y') = (dx, dy)$.  Second,  the series of general term $\mu(d) /d^2$ is  convergent, and, with the Mobius inversion,  its  sum  equal $1/\zeta(2)$ and 
$$ \left| \sum_{d \le  n} \frac {\mu(d)}{d^2} - \frac{ 6} {\pi^2 }\right|   \le \frac 1 n\, .$$ 
  Third,  we relate  the bound  $ M_{g_d} (\Omega_d)$   to  its analogous.    As $g$ is homogeneous of degree $-\beta $,  its derivative is homogeneous of degree $(-\beta -1)$ and  the   two relations  $$
  g_d(x, y) = g(dx, dy) = \frac 1 {d^\beta} g(x, y) \, ,$$
  $$ \frac  {\partial g_d} {\partial y}(x, y) = d  \frac {\partial g} {\partial y}(dx, dy) = 
   \frac 1 {d^\beta}   \frac  {\partial g} {\partial y} (x, y)\, , $$ 
  hold for $(x, y) \in {\cal R}$.   As $g$ and its derivative  are 0 outside ${\cal R}$, the same  holds for $g_d$ and its derivative, and
  $$ M_{g_d}( \Omega_d) = M_ {g_d}( \Omega_d\cap {\cal R}) =  \frac 1 {d^\beta}  M_g  (\Omega_d \cap{\cal R})\le 
   \frac 1 {d^\beta}  M_g  ({\cal R}) \, .$$
   Then, as $\beta>1$, one has
   $$ \sum_{d \le n}  M_{g_d}( \Omega_d) \le \zeta (\beta)  M_g  ({\cal R}) \, .$$

  \smallskip
With the three  previous properties, together with   Eq. \eqref{deb}, 
we obtain  the final result.
\end{proof}

 \subsection {Distributions. Proof of Theorem \ref{thm1}.} 
   Theorem \ref{thm1} is a particular case of  the previous Proposition  \ref{coprimesum}, when it applies to $\omega$ and $\Delta_f(\lambda)$  defined in \eqref{omega}  and \eqref{Deltaf}. The function $\omega$ is  homogeneous of degree -2   and the domain  $\Delta_f(\lambda)$ is convex, as it is the intersection of the unbounded rectangle ${\cal R}$ with the halfplane $\{  f(x, y) \le \lambda\} $. 
   Applying Proposition \ref{coprimesum}  
  then  proves Theorem \ref{thm1}.

 \subsection {Proof of  Theorem \ref{thm2}.}
  Assertion $(a)$   is proven in the Annex. We prove now Assertion $(b)$. 
 We let 
$$F_n(\lambda):= {\mathbb P} [\Lambda_n \le \lambda], \qquad  F_\infty (\lambda)= \frac {12}{\pi^2} I_f(\lambda) \, .$$
 We know that  the derivative   $J_f(\lambda)$ of  $\lambda \mapsto I_f(\lambda)$  exists.  This is the same for the function $F_\infty$ and  we wish  to estimate the  difference 
$$\left|\frac{F_n(\lambda+\epsilon(n))-F_n(\lambda)}{\epsilon(n)}-F_\infty^\prime(\lambda)\right| \, .$$

We begin with the triangle inequality  \begin{equation} \label {ineq}
\left|\frac{F_n(\lambda+\epsilon(n))-F_n(\lambda)}{\epsilon(n)}-F_\infty^\prime(\lambda)\right|
 \end{equation}
$$
 \leq \left|\frac{F_n(\lambda+\epsilon(n))-F_\infty(\lambda+\epsilon(n))}{\epsilon(n)}\right| +  \left|\frac{F_\infty(\lambda)-F_n(\lambda)}{\epsilon(n)}\right| 
 $$
 $$
 +  \left|\frac{F_\infty(\lambda+\epsilon(n))-F_\infty(\lambda)}{\epsilon(n)}-F_\infty^\prime(\lambda)\right| \, .
  $$
 With the special form of  function $f$, the domain $\Delta_f(\lambda)$  is convex, and Theorem \ref{thm1}  provides the estimates
 \begin{align*}
|F_n(\lambda)&-F_\infty(\lambda)| = O\left( 1/n\right)\,, \\
|F_n(\lambda+\epsilon(n))&-F_\infty(\lambda+\epsilon(n))| = O\left(1/n\right)\,,
\end{align*}
where the constant in the $O$-terms   does not depend on $\lambda$ and $\epsilon(n)$. Then, the  first  two terms in Inequality \eqref{ineq} are $O( 1/(n \epsilon(n))$ which tends to $0$ because $n \epsilon(n)\to \infty$.
 For the last term in \eqref{ineq}, we use    Taylor expansion of order 2 of the function $F_\infty$ together Assertion $(a)$.

\section {Conditional expectations. Proof of Thm \ref{thm3}.} 

 We now focus on  conditional expectations. Our final purpose is to prove Theorem \ref{thm3} which is devoted to the recurrence quotient.  However, we begin by a more general study  and we obtain  in Section 4.3 a general result on conditional expectations  (Theorem \ref{conditionalexpectations}). We then apply it in Section  4.4 to the particular case of the recurrence quotient, and this provides Theorem \ref{thm3gen}, which can be viewed  itself as an extension of Theorem \ref{thm3}.

\subsection{Limit expectation of  bounded ${\cal Q}$ functions.} 
Thus far,  we  dealt with distributions of ${\cal Q}$-functions. Now,  we  consider expected values of a ${\cal Q}$-function, and use the equality 
\begin{equation*}
{\mathbb{E} } [\Lambda_n] =\int_0^\infty {\mathbb{P}}[\Lambda_n \geq \lambda] \, d\lambda\,,
\end{equation*}
valid when $\Lambda\geq 0$, as in our case.   We consider  here  the case of a ${\cal Q}$ function  $\Lambda$ associated with a bounded  function $f$
(which is the case when $b_2$ is not zero).  It is then possible to interchange the 
 the limit and the integral and use Theorem \ref{thm1}. \\
 When reversing the order of integration,  we  first integrate  with respect to $\lambda$,  and we are led to  the integral 
 \begin{equation}
\label{eq:expectedomega}
{\mathbb E}_\psi [f] := \frac 6{\pi^2} I(f \cdot 2\omega, {\cal R} ) 
\end{equation}
which is exactly the expectation  ${\mathbb E}_\psi[f]$ of the function $f$  on the rectangle ${\cal R}$ with respect to the density  $\psi :=(12/\pi^2)\,  \omega$.  We  thus obtain the following result   which provides an extension of  Theorem \ref{thm1}:  

\begin{theorem}
\label{thmexpect}
Consider a $\mathcal{Q}$-function $\Lambda$ associated with a function $f$ bounded by $B_f$.
Then 
the sequence $n \mapsto \Lambda_n$   admits a limit expected value  as $n\to\infty$ equal to  the expectation  ${\mathbb E}_\psi[f]$ of the function $f$  on the rectangle ${\cal R}$ with respect to the density  $\psi :=(12/\pi^2)\, \omega$,  and 
\begin{equation}
{\mathbb{E}}\left[\Lambda_n\right] = {\mathbb{E}}_\psi[f]   + B_f \, O\left(  \frac{1}{n}\right)\,,
\end{equation}
where the constant in the $O$-term  does not depend on $f$ and $\lambda$. 
\end{theorem}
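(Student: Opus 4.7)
The plan is to start from the tail-integral identity $\mathbb{E}[\Lambda_n] = \int_0^{\infty} \mathbb{P}[\Lambda_n \geq \lambda] \, d\lambda$, truncate it thanks to the boundedness of $f$, and then plug in the uniform estimate of Theorem \ref{thm1}. First I would observe that whenever $n \in [q_{k-1}(\alpha), q_k(\alpha)[$, the point $(q_{k-1}(\alpha)/n, q_k(\alpha)/n)$ lies in $\mathcal{R}$, since $q_{k-1}(\alpha)/n \leq 1 < q_k(\alpha)/n$. Consequently $\Lambda_n(\alpha) = f(q_{k-1}(\alpha)/n, q_k(\alpha)/n) \leq B_f$ for almost every $\alpha$, and the tail integral collapses to $\mathbb{E}[\Lambda_n] = \int_0^{B_f} (1 - F_n(\lambda)) \, d\lambda$.

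Next I would invoke Theorem \ref{thm1}, which states that $F_n(\lambda) = (12/\pi^2) I_f(\lambda) + O(1/n)$ where the implicit constant is independent of the pair $(f,\lambda)$. Since this error term is uniform over $\lambda \in [0, B_f]$, one gets
\begin{equation*}
\mathbb{E}[\Lambda_n] = \int_0^{B_f} \left(1 - \frac{12}{\pi^2} I_f(\lambda)\right) d\lambda + B_f \cdot O\!\left(\frac{1}{n}\right).
\end{equation*}
It only remains to identify the main term with $\mathbb{E}_\psi[f]$. Using that $(12/\pi^2) I(\omega, \mathcal{R}) = 1$, one rewrites
\begin{equation*}
1 - \frac{12}{\pi^2} I_f(\lambda) = \frac{12}{\pi^2} \iint_{\mathcal{R} \setminus \Delta_f(\lambda)} \omega(x,y) \, dx \, dy = \mathbb{P}_\psi[f \geq \lambda].
\end{equation*}
Applying Fubini's theorem (layer-cake representation) to the non-negative function $f$, bounded by $B_f$, on the measure space $(\mathcal{R}, \psi \, dx\, dy)$ yields
\begin{equation*}
\int_0^{B_f} \mathbb{P}_\psi[f \geq \lambda] \, d\lambda = \iint_{\mathcal{R}} f(x,y) \psi(x,y) \, dx \, dy = \mathbb{E}_\psi[f],
\end{equation*}
which is precisely the integral $(6/\pi^2) I(f \cdot 2\omega, \mathcal{R})$ of \eqref{eq:expectedomega}.

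There is no real obstacle here beyond assembling these ingredients; the genuine content is Theorem \ref{thm1}, whose uniformity in $(f, \lambda)$ is exactly what allows the error $O(1/n)$ to survive integration over a range of length $B_f$, producing the advertised $B_f \cdot O(1/n)$. The only point that requires a brief verification is the boundedness of $\Lambda_n$ (and hence the validity of truncating the tail integral at $B_f$), which follows directly from the location of $(q_{k-1}/n, q_k/n)$ in $\mathcal{R}$. The hypothesis that $b_2 \neq 0$ is what guarantees $f$ is bounded on $\mathcal{R}$ in the first place, since $f$ is a quotient of linear forms and $\mathcal{R}$ is unbounded only in the $y$-direction.
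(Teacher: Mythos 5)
Your proof is correct and follows essentially the same route as the paper: the tail-integral identity $\mathbb{E}[\Lambda_n]=\int_0^\infty \mathbb{P}[\Lambda_n\ge\lambda]\,d\lambda$, truncation using the bound $B_f$, the uniform-in-$(f,\lambda)$ error of Theorem \ref{thm1}, and a reversal of the order of integration to identify the limit with $\mathbb{E}_\psi[f]$. You merely make explicit the steps the paper leaves implicit (the truncation at $B_f$ and the layer-cake identification), so nothing further is needed.
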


\subsection {Case of the recurrence quotient.} 
 The  function $f$ associated with   the recurrence quotient $S(\alpha, n)$ is $f_S(x,y)=1+x+y$. It is  is unbounded on $\mathcal{R}$,  and the function $f_S$  is not integrable with respect to $\psi$.  In fact, by the argument of Proposition \ref{prop:discreteprob} the expected value can be worked out to be
$${\mathbb E}[S_n] = \hat R_n(2\omega f_S),\mathcal{R}),\,$$
and here $ \hat R_n(2\omega f_S,\mathcal{R})$ is infinite for each $n$.

\medskip 
This is why we consider the conditional expectations for the sequence $S_n$ with respect to an event $[\Gamma_n \ge \epsilon(n)]$ associated with another ${\cal Q}$-function $\Gamma$,  namely
$${\mathbb{E}}[S_n|\Gamma_n\geq \epsilon(n)] \, .$$
We will choose in the sequel  the ${\cal Q}$-function $\Gamma$ from the set $\{\mu, \nu, \rho\}$ and a positive sequence $\epsilon(n)$ tending to 0 not all too quickly. 

\subsection{General conditional expectations. }We consider  more general conditional expectations, 
$${\mathbb{E}}[\Lambda_n|\Gamma_n\geq \epsilon]   \qquad (\epsilon >0) $$
when $\Gamma$ is a $\mathcal{Q}$-function associated with  a function $g$ which tends to $0$  for $y \to \infty$. (This means that the  pair  $(b_1, b_2)$ in \eqref{coeff}  satisfies $b_1/b_2 = 0$). The subset 
$$ \{ (x, y) \in {\cal R} \mid g(x, y) \ge \epsilon \} $$
is  bounded for $\epsilon >0$, and we denote, for $\epsilon >0$,  
$$  B_{f|g} (\epsilon) := \sup\{ f(x, y) \mid g(x, y) \ge \epsilon \} < \infty. $$
In this case, the expectation of $f$ with respect to $\psi$ conditioned to the event $[g\ge \epsilon]$ is well defined, and denoted as  $${\mathbb{E}}_\psi[f|g\geq \epsilon]\, .$$
The following holds, and its proof (similar to the proof of  Theorem \ref{thm1}  is in the Annex.

\begin{theorem}
\label{conditionalexpectations}
 Consider two $\mathcal{Q}$-functions $\Lambda$ and $\Gamma$ with  respective associated functions $f$ and $g$.   Assume that $g$ tends to 0 for $y \to \infty$.  Then the conditional expectation of $\Lambda_n$  with respect to the event $[\Gamma_n \ge \epsilon]$ satisfies
\begin{equation*}
{\mathbb{E}}[\Lambda_n|\Gamma_n\geq \epsilon]  \cdot \mathbb{P}[\Gamma_n\geq \epsilon] = {\mathbb{E}}_\psi[f|g\geq \epsilon]  \cdot \mathbb{P}_\psi[g\geq \epsilon] 
\end{equation*}
$$
 	  + B_{f|g} (\epsilon) \, O\left(\frac{1}{n}\right)
$$
where the constant in the $O$-term  does not depend on either $f$, $g$ or $\epsilon$.
\end{theorem}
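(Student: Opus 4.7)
The plan is to follow the template of Theorem \ref{thm1} and Theorem \ref{thmexpect}: convert the conditional expectation into a coprime Riemann sum, then compare it to its integral. Since $\Lambda_n\ge 0$, the tail-integral identity
$$\mathbb{E}[\Lambda_n\,\mathbf{1}_{\Gamma_n\geq \epsilon}] = \int_0^\infty \mathbb{P}[\Lambda_n \geq \lambda,\, \Gamma_n \geq \epsilon]\, d\lambda,$$
combined with \eqref{LambdaGamma} and Tonelli (exchanging the integral with the finite coprime sum), together with the pointwise identity $\int_0^\infty \mathbf{1}_{f(x,y)\geq \lambda}\,d\lambda = f(x,y)$, yields
$$\mathbb{E}[\Lambda_n\,\mathbf{1}_{\Gamma_n\geq \epsilon}] = \hat R_n\bigl(2\omega f,\,\Omega_\epsilon\bigr),$$
where $\Omega_\epsilon := \{(x,y)\in \mathcal{R} : g(x,y)\geq \epsilon\}$.

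The domain $\Omega_\epsilon$ is convex, being the intersection of $\mathcal{R}$ with the half-plane $\{g\geq \epsilon\}$ (since $g$ is linear-fractional), and bounded, since $g\to 0$ as $y\to\infty$ forces $y$ to stay bounded on $\Omega_\epsilon$. By definition $f\leq B_{f|g}(\epsilon)$ on $\Omega_\epsilon$. The goal is then to show
$$\hat R_n(2\omega f,\,\Omega_\epsilon) = \frac{6}{\pi^2}\,I(2\omega f,\,\Omega_\epsilon) + B_{f|g}(\epsilon)\cdot O(1/n),$$
whose main term equals $\mathbb{E}_\psi[f\mid g\geq \epsilon]\cdot \mathbb{P}_\psi[g\geq \epsilon]$ through the identity $\psi = (12/\pi^2)\omega$.

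To prove this estimate, I revisit the proof of Proposition \ref{coprimesum}. Möbius inversion gives $\hat R_n(2\omega f,\Omega_\epsilon) = \sum_{d\leq n}\mu(d)\,R_n((2\omega f)_d,\,(\Omega_\epsilon)_d)$, and Proposition \ref{sumintegral} applied to each dilated Riemann sum, combined with the change of variables $(x',y')=(dx,dy)$ giving $I((2\omega f)_d,(\Omega_\epsilon)_d) = d^{-2}\,I(2\omega f,\Omega_\epsilon)$ and the classical estimate $\sum_{d\leq n}\mu(d)/d^2 = 6/\pi^2 + O(1/n)$, produces the claimed main term. Since $f\leq B_{f|g}(\epsilon)$ on $\Omega_\epsilon$ and $\omega$ is homogeneous of degree $-2$, one has $|(2\omega f)_d(x,y)| \leq 2B_{f|g}(\epsilon)\,\omega(x,y)/d^2$ on $(\Omega_\epsilon)_d$, so the sup-norm bound $C_{(2\omega f)_d}$ decays like $B_{f|g}(\epsilon)/d^2$.

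The principal technical obstacle is the derivative bound $D_{(2\omega f)_d}$, since $f$ is not itself homogeneous. By the product rule,
$$\partial_y(2\omega f)_d = 2d\,(\partial_y\omega)(dx,dy)\,f(dx,dy) + 2d\,\omega(dx,dy)\,(\partial_y f)(dx,dy).$$
The first summand contributes a $1/d^2$ factor (as $\partial_y\omega$ is homogeneous of degree $-3$ and $f\leq B_{f|g}(\epsilon)$), and the second requires a uniform bound on $\partial_y f$ over the closure of $\Omega_\epsilon$, which holds because $\Omega_\epsilon$ is bounded and the denominator of $f$ (which does not vanish on $\mathcal{R}$ by non-negativity of $f$) stays bounded away from zero on this closure. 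Summing via $\sum_{d\leq n} 1/d^2 \leq \zeta(2)$ keeps the total error within $B_{f|g}(\epsilon)\cdot O(1/n)$, completing the argument.
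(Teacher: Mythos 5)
Your opening moves match the paper's: both proofs start from the tail formula $\mathbb{E}[\Lambda_n\mathbf{1}_{\Gamma_n\ge\epsilon}]=\int_0^\infty\mathbb{P}[\Lambda_n\ge\lambda,\Gamma_n\ge\epsilon]\,d\lambda$ and the coprime-Riemann-sum identity \eqref{LambdaGamma}. But you then diverge at the decisive step, and your route has a genuine gap. The paper keeps the $\lambda$-integration on the \emph{outside}: it applies Proposition \ref{coprimesum} to the fixed homogeneous weight $\omega$ on the convex domain $\und\Delta_{f,g}(\lambda,\epsilon)$ for each $\lambda$, getting an error $O(1/n)$ whose constant is absolute (it depends only on $M_\omega(\mathcal{R})$, not on the domain), and only afterwards integrates over $\lambda\in[0,B_{f|g}(\epsilon)]$, which multiplies that absolute error by the length $B_{f|g}(\epsilon)$ of the integration range. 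You instead integrate in $\lambda$ first at the discrete level, arriving at $\hat R_n(2\omega f,\Omega_\epsilon)$, and then must extend Proposition \ref{coprimesum} to the \emph{non-homogeneous} weight $2\omega f$. That extension is where the argument breaks.

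Concretely: after Möbius inversion you need $\sum_{d\le n}M_{(2\omega f)_d}((\Omega_\epsilon)_d)=O(B_{f|g}(\epsilon))$. You compute correctly that $\partial_y[(2\omega f)_d](x,y)=2d^{-2}(\partial_y\omega)(x,y)f(dx,dy)+2d^{-1}\omega(x,y)(\partial_yf)(dx,dy)$, but the second (cross) term decays only like $d^{-1}$, not $d^{-2}$. Summing $\tfrac{1}{n}\sum_{d\le n}d^{-1}$ yields $O(\log n/n)$, not the claimed $O(1/n)$ --- your appeal to $\sum_{d\le n}d^{-2}\le\zeta(2)$ covers only the first term. (For $f_S=1+x+y$ the issue is visible directly: $2\omega f_S=2\omega+2/y$, and $2/y$ is homogeneous of degree $-1$, outside the range $\beta>1$ required by Proposition \ref{coprimesum}.) In addition, your bound involves $K_\epsilon:=\sup_{\Omega_\epsilon}|\partial_yf|$, which you do not relate to $B_{f|g}(\epsilon)$; the theorem asserts an error $B_{f|g}(\epsilon)\,O(1/n)$ with a constant independent of $f$, $g$ and $\epsilon$, and that uniformity is not established on your route. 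The fix is to revert to the paper's order of operations: estimate $\mathbb{P}[\Lambda_n\ge\lambda,\Gamma_n\ge\epsilon]$ per fixed $\lambda$ via Proposition \ref{coprimesum} applied to $\omega$ alone, then integrate the resulting uniform estimate over $0\le\lambda\le B_{f|g}(\epsilon)$ and finish with Fubini on the triple integral to produce the factor $f(x,y)$.
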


\subsection{Return to the conditional expectation of the recurrence quotient. Proof of Theorem \ref{thm3}.}

We will prove here a stronger version of Theorem \ref{thm3}, where the remainder terms are more precise. 

\begin{theorem}
\label{thm3gen}
Consider   a parameter $\Gamma \in \{ \rho, \mu, \nu\} $ defined in \eqref{rho} and \eqref{munu}.  
and a sequence $n \mapsto \epsilon (n)$ which tends to 0  with $\epsilon (n) = \Omega(1 /n)$.   
Then the conditional expectation  
of the recurrence quotient  $S_n$  with respect to  the event $[\Gamma_n \ge \epsilon(n)]$ 
satisfies 
$$ 
{\mathbb{E}}\left[  S_n \Big| \Gamma_n\ge   \epsilon(n) \right]   = \frac {12} {\pi^2} |\log \epsilon(n)| +  C(\Gamma)  
$$ 
$$ +  O\left(\frac{1}{\epsilon(n) n} +\epsilon(n)|\log \epsilon(n)|^2\right)\,.$$
Moreover, the constants $C(\Gamma) $ satisfy $$C(\nu ) = +1,\quad  C(\mu) = 0, \quad C(\rho) = +1 \, .$$
\end{theorem}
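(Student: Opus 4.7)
My plan is to specialize Theorem~\ref{conditionalexpectations} to $f = f_S = 1+x+y$ and to $g = f_\Gamma$ for each $\Gamma \in \{\rho,\mu,\nu\}$, which transfers the problem from the discrete law of $S_n$ to a clean integral against the density $\psi = (12/\pi^2)\,\omega$. Since, on the event $[\Gamma_n \ge \epsilon]$, the recurrence quotient is bounded by $2+1/\Gamma$ (via \eqref{ratio1} for $\rho$, and by the analogous bounds one reads off from \eqref{ratio} for $\mu,\nu$), one has $B_{f_S\mid f_\Gamma}(\epsilon)=O(1/\epsilon)$ in each case. After dividing by $\mathbb{P}[\Gamma_n\ge \epsilon]=1+o(1)$, the error contributed by the passage to the continuous model becomes $O(1/(\epsilon(n)\,n))$, which is exactly the first piece of the stated remainder.

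The heart of the argument is the evaluation of the continuous conditional expectation. The key identity is the one-line decomposition
$$\frac{1+x+y}{y(x+y)} \;=\; \frac{1}{y} \;+\; \frac{1}{y(x+y)},$$
which splits the numerator integral as
$$\mathbb{E}_\psi[f_S;\,g\ge\epsilon] \;=\; \frac{12}{\pi^2}\,A_\Gamma(\epsilon) \;+\; \mathbb{P}_\psi[g\ge\epsilon],$$
where $A_\Gamma(\epsilon):=\iint_{\{g\ge\epsilon\}\cap\mathcal{R}} dx\,dy/y$. Dividing by $\mathbb{P}_\psi[g\ge\epsilon]$ yields the pleasantly clean expression
$$\mathbb{E}_\psi[f_S\mid g\ge\epsilon] \;=\; \frac{(12/\pi^2)\,A_\Gamma(\epsilon)}{\mathbb{P}_\psi[g\ge\epsilon]} \;+\; 1,$$
so the leading $(12/\pi^2)|\log \epsilon|$ and the constant $C(\Gamma)$ must emerge entirely from $A_\Gamma(\epsilon)$, up to the additive $+1$ coming from the second summand.

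For each parameter I would now compute $A_\Gamma$ in closed form after parameterizing the region $\{g\ge\epsilon\}\cap\mathcal{R}$. For $\nu$ the region is $0 < x\le 1,\ 1 < y\le 1/\epsilon$, giving $A_\nu(\epsilon)=|\log\epsilon|$ exactly. For $\rho$ the region is $\epsilon < x\le 1,\ 1 < y\le x/\epsilon$, and a direct antiderivative gives $A_\rho(\epsilon)=|\log\epsilon|-1+\epsilon$. For $\mu$, after the substitution $u=1-x(1-\epsilon)$ the inner integral reduces to $\int_\epsilon^1\log u\,du$ and yields $A_\mu(\epsilon)=|\log\epsilon|-1+O(\epsilon|\log\epsilon|)$. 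The next-order behavior $\mathbb{P}_\psi[g\ge\epsilon] = 1 + O(\epsilon|\log\epsilon|)$ is read directly from the limit densities tabulated in Figure~2 (or obtained by the same partial-fraction trick). Expanding $1/\mathbb{P}_\psi[g\ge\epsilon]$ geometrically produces a cross-term of size $(12/\pi^2)|\log\epsilon|\cdot O(\epsilon|\log\epsilon|)=O(\epsilon|\log\epsilon|^2)$, which is exactly the second piece of the stated remainder.

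The main obstacle is the careful bookkeeping of the constant $C(\Gamma)$: it arises by combining the constant part of $(12/\pi^2)\,A_\Gamma(\epsilon)$, the additive $+1$ from the second summand of the splitting identity, and the constant cross-terms generated by the geometric expansion of $1/\mathbb{P}_\psi[g\ge\epsilon]$. All three contributions must be tracked simultaneously to precision $O(\epsilon|\log\epsilon|^2)$, and their interplay with the $B_{f|g}(\epsilon)/n$ error from Theorem~\ref{conditionalexpectations} is precisely what dictates the two-term shape of the remainder. No deeper analytic tool is required once the partial-fraction identity above is in hand.
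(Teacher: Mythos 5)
Your overall route is the same as the paper's: specialize Theorem \ref{conditionalexpectations} to $f_S$ and $f_\Gamma$, bound $B_{f_S|f_\Gamma}(\epsilon)=O(1/\epsilon)$, and reduce everything to the continuous integral $\mathbb{E}_\psi[f_S\,;\,f_\Gamma\ge\epsilon]$ that the paper tabulates in Figure \ref{fig:boundss}. Your partial-fraction identity $\frac{1+x+y}{y(x+y)}=\frac{1}{y}+\frac{1}{y(x+y)}$ is a clean way to organize that integral, and your values $A_\nu(\epsilon)=|\log\epsilon|$, $A_\rho(\epsilon)=|\log\epsilon|-1+\epsilon$, $A_\mu(\epsilon)=|\log\epsilon|-1+O(\epsilon|\log\epsilon|)$ are all correct. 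The gap is that you stop exactly where the work is: you never assemble $C(\Gamma)$, and if one does assemble it from your own formulas one does \emph{not} get the constants in the statement. Your identity gives
$$\mathbb{E}_\psi[f_S\mid f_\Gamma\ge\epsilon]=\frac{(12/\pi^2)\,A_\Gamma(\epsilon)}{\mathbb{P}_\psi[f_\Gamma\ge\epsilon]}+1\,,$$
and since $\mathbb{P}_\psi[f_\Gamma\ge\epsilon]=1+O(\epsilon|\log\epsilon|)$, the constant term is $1$ plus $12/\pi^2$ times the constant part of $A_\Gamma(\epsilon)$. For $\nu$ that constant part is $0$ and you recover $C(\nu)=1$; but for $\rho$ and $\mu$ it is $-1$, so your computation yields $C(\rho)=C(\mu)=1-12/\pi^2\approx-0.216$, not $+1$ and $0$.

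This is not a slip in your integrals; the discrepancy is forced by monotonicity. Since $\rho\le\nu$ pointwise, $\{f_\rho\ge\epsilon\}\subset\{f_\nu\ge\epsilon\}$, and the difference of the two numerators is the integral of $f_S\psi$ over the region $\{f_\rho<\epsilon\le f_\nu\}$, i.e.\ over the triangle $x<\epsilon y$, $1\le y\le 1/\epsilon$; there the $1/y$ part of the integrand contributes exactly $\frac{12}{\pi^2}(1-\epsilon)$ and the $1/(y(x+y))$ part only $O(\epsilon|\log\epsilon|)$. Hence the constants for $\rho$ and $\nu$ must differ by $12/\pi^2$ and cannot both equal $+1$; the same argument applies to $\mu$. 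So your method is the right one — essentially the paper's method — but carried out to the end it contradicts the stated values of $C(\rho)$ and $C(\mu)$ (and the $\rho$, $\mu$ rows of Figure \ref{fig:boundss}, which silently drop the $-12/\pi^2$ that your $A_\rho$, $A_\mu$ produce, while the $\nu$ row is consistent with your computation). You must either locate an error in your evaluation of $A_\rho$, $A_\mu$ (I cannot find one) or conclude that the constants should read $C(\rho)=C(\mu)=1-12/\pi^2$. As written, the proposal asserts the theorem's constants without deriving them, and the derivation it sets up produces different numbers.
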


\begin{proof}
The proof is an application of Theorem \ref{conditionalexpectations}. First, a direct computation with Theorem \ref{thm1} shows that if $\Gamma$ is one of the $\cal Q$-functions $\rho$, $\mu$ or $\nu$,  the following estimates hold
$$
\mathbb{P}[\Gamma_n\geq \epsilon(n)] = 1 + O(\epsilon(n)+1/n) \, .
$$
Along with the bounds and the integrals  provided  in Figure \ref{fig:boundss} (in the Annex),  this implies the result. \EOP
\end{proof}


 Now,  Theorem \ref{thm3} is an immediate application of Theorem \ref{thm3gen} by taking $\epsilon(n)=1/n$.

\section{Comparison between the two models.}

 We now compare the two models, the present model (with a large fixed $n$) and the model which was previously studied in \cite{mfcs},  namely, the model with a large fixed $k$. We first recall our result of \cite{mfcs}, we then  study the number of continuants  $q_k(\alpha)$ in an interval of the form  $[n, cn]$ for $c>1$. 
  
  \subsection{Results in the previous model.} When the integer $n$  of the interval $[q_{k-1}(\alpha), q_k(\alpha)[$  is at a position $\mu$  there, the recurrence  quotient admits the expression
  \begin{equation}\label{smo}
  {S}_{k}^{\langle \mu \rangle}(\alpha) =  f_{\mu}\left(\frac {q_{k-1}(\alpha)}{q_{k}(\alpha)} \right)
  \end{equation} which involves the function
  \begin{equation} 
  f_{\mu}(x):=  1 + \frac  {1 +x} { x+ \mu( 1 - x )}. 
  \end{equation}
  The main idea of the study  ``with a  fixed $k$'' relates the  recurrence quotient and the $k$-th iterate  of the  Euclidean transfer operator ${\bf H}$ via the equality 
  \begin{equation}\label{smo1}
{\mathbb E} [S_{k}^{\langle \mu \rangle}] =  {\bf H}^k \left[x \mapsto \frac {f_{\mu}(x)}{1+x}\right] (0) \, .
   \end{equation}
   The operator ${\bf H}$  admits  nice dominant spectral properties, and, notably,   the celebrated Gauss density   \begin{equation} \label{Gauss}  x \mapsto \left(\frac 1 {\log 2}\right) \,  \frac 1 {1+x}
   \end{equation}  as its fixed density. This leads 
to the estimate
   \begin{equation}\label{smo2}
\lim_{k \to \infty}  {\mathbb E} [S_{k}^{\langle \mu \rangle}]=  1 +  \frac 1 {\log 2} \int_0^1   
\frac 1 {t+ \mu( 1 - t )}
dt \, , 
  \end{equation}
More precisely, we have shown the following in \cite{mfcs}: 
 for the sequence $\mu_k = \tau^k$,  with  $\tau  \in  [\varphi^2, 1[$, (where $\varphi$ is the inverse of the Golden ratio), the following holds  \begin{equation} \label{Eskmuk}
\E[S_k^{\langle \tau^k \rangle}] \sim  \frac 1 {\log 2} \, k  |\log\tau |  \quad   (k \to \infty)\, .
\end{equation}

 \subsection{Relation between the two models.}  
  We now wish  to relate  the two  (asymptotic) models: the present model ``with fixed large $n$'' and the previous model ``with fixed large $k$''?  Of course, these two models  should be close if the  behaviour of the sequence $k \mapsto q_k(\alpha)$  does not depend too strongly on $\alpha$, and we know that it is not the case. However, the behaviour of  the sequence $k \mapsto \log q_k(\alpha)$ is much more regular, as it is well known  (see for instance \cite{khinchin}) that 
  \begin{equation} \label {Levy1}
  \lim_{k \to \infty}  \frac 1 k \log q_k(\alpha) =  L =   \frac{\pi^2}{12 \log 2}  \quad \hbox{for almost all $\alpha$} \, .
  \end{equation}

 Consider   first the present model ``with $n$ fixed'', and a sequence $\ell\mapsto n(\ell)=  \tau^\ell$. Then Theorem \ref{thm3} is  writes  as   
\begin{equation} \label{model2} {\mathbb E} [S_{n(\ell)}  \mid \mu_{n(\ell)} \ge  \tau^{-\ell} ]  \sim \left[ \frac {12}{\pi^2}  \log \tau \right]  \ell\, .\end{equation}
Furhermore, as $ n(\ell) $ belongs to the interval $[q_{k-1}(\alpha), q_k(\alpha)[$, the existence of the limit  for the quotient $q_k(\alpha) /k$, that holds for almost any $\alpha$, and  is recalled in \eqref{Levy1} entails  the  relation between the index $\ell$ and the index $k:= k(\alpha, n(\ell))$,  that holds for almost any $\alpha$, namely
 \begin{equation} \label {Levy}
  \log n(\ell )=  \ell \log \tau  \sim  \frac {\pi^2}{12 \log 2} \, k(\alpha, n(\ell))\,  .
  \end{equation}
Now,  we  deal with the model ``with $k$ fixed'',   and  we consider that the index $k(\alpha, n(\ell))$  satisfies \eqref{Levy} {\em everywhere}. Then,   the application of the result in the model ``with $k$ fixed'', described in \eqref{Eskmuk}  
should entail 
\begin{equation} \label {model1}  \E[S_k^{\langle \tau^k \rangle}] \sim  \left[\frac 1 {\log 2} \log\tau \right] \, k   \sim  \left[\frac {12} {\pi^2}\log^2 \tau \right]\ell \, .  \end{equation}
Remark that the conditional events  are not the same in the two equations \eqref{model2} and \eqref{model1}: 
 
\hskip 0.3cm 
 -- in \eqref{model2}, the  event is $\{ \alpha\mid \mu(\alpha, n(\ell)) \ge \tau^{-\ell}\}$,

\hskip 0.3cm
--  in \eqref{model1} the  event is $\{ \alpha \mid  \mu(\alpha, n(\ell))  \sim \tau^{-\ell}\}$.

This (heuristic)  comparison exhibits in both cases a linear growth with respect to $\ell$. However, the events of interest are not the same, and we have considered  that   the index $k(\alpha, n(\ell))$ satisfies \eqref{Levy} everywhere.

\subsection{Number of continuants in an interval. } 
There is also an interesting connection between the two models, that   counts
the number of terms of the sequence $k \mapsto q_k(\alpha)$ that belongs to the interval $[n, cn[$, for some fixed $c>1$. We thus study the function
$$ (\alpha, n) \mapsto T(\alpha, n) := \sum_{ k \ge 0}  \, [\![ q_k(\alpha) \in [n, cn[ ]\!]\, .$$

 \begin{proposition}{ Consider  the L\'evy constant $\kappa := \exp \left( \pi^2/(12   \log 2) \right)$. Then the  mean number of continuants in the interval $[n , \kappa n]$ tends to 1 as $n \to \infty$}
 \end{proposition}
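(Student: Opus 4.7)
I would express the expectation as a coprime Riemann sum in the style of Proposition~\ref{prop:discreteprob}. Expanding
\[
\E[T(\cdot,n)] = \sum_{k\ge 0} \sum_{\m \in \mathbb{N}^k} |I_k(\m)| \,[\!\![\, q_k(\m)\in [n,\kappa n) \,]\!\!],
\]
using the length formula $|I_k(\m)| = \omega(q_{k-1}(\m), q_k(\m))$ of \eqref{fund-int}, the two-to-one correspondence between sequences $\m\in\mathbb{N}^\star$ and coprime pairs $(a,b)$ with $0<a<b$ (the pair $(1,1)$ being the only exception and irrelevant for $n\ge 2$, as is the $k=0$ term), together with the homogeneity of $\omega$, one obtains
\[
\E[T(\cdot,n)] = \hat R_n(2\omega,\Omega) + O(1/n^2),
\]
where $\Omega := \{(x,y) : 0<x<y,\ 1\le y<\kappa\}$.

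The next step is to evaluate $\hat R_n(2\omega, \Omega)$ as $n\to\infty$. The subtlety is that $\Omega$ is not contained in $\cal R$, so Proposition~\ref{coprimesum} does not apply directly; however, $\Omega$ is bounded and convex, and $\bar\Omega\subset\{y\ge 1\}$ stays well inside the domain of regularity of $\omega$, so $\omega$ and $|\nabla\omega|$ are bounded on $\bar\Omega$. The M\"obius-inversion argument of Proposition~\ref{coprimesum} can then be rerun on $\Omega$; a direct scaling check on $\Omega_d := \Omega/d$ yields $\sup g_d = O(1)$, $\sup|\nabla g_d| = O(d)$, $|\Omega_d| = O(1/d^2)$, and $\mathrm{perimeter}(\Omega_d) = O(1/d)$, giving a plain Riemann error $O(1/(nd))$ on each $\Omega_d$. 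Summing over $d\le \kappa n$ produces
\[
\hat R_n(2\omega,\Omega) = \frac{12}{\pi^2}\, I(\omega,\Omega) + O\!\left(\frac{\log n}{n}\right).
\]

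A direct calculation gives
\[
I(\omega,\Omega) = \int_1^{\kappa} \frac{dy}{y} \int_0^{y} \frac{dx}{x+y} = \int_1^{\kappa} \frac{\log 2}{y}\, dy = (\log 2)(\log\kappa),
\]
and since $\log\kappa = \pi^2/(12\log 2)$ this equals $\pi^2/12$. Therefore $\E[T(\cdot,n)] \to (12/\pi^2)\cdot (\pi^2/12) = 1$, as announced. The main obstacle is the adaptation of Proposition~\ref{coprimesum} to a domain not lying in $\cal R$: since $\Omega$ is bounded and $\omega$ remains regular on $\bar\Omega$, this reduces to redoing the scaling estimates for $g_d$ and $|\nabla g_d|$ on the $\Omega_d$, and the resulting $O(\log n/n)$ error comfortably suffices for the stated convergence.
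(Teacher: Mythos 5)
Your proof follows the same route as the paper: convert $\E[T_n]$ into the coprime Riemann sum $\hat R_n(2\omega,\Omega)$ over the strip $\{(x,y): x\le y,\ 1\le y\le\kappa\}$, pass to the integral via the M\"obius-inversion comparison, and evaluate $I(\omega,\Omega)=(\log 2)(\log\kappa)=\pi^2/12$. You are in fact slightly more careful than the paper on the one delicate point (the domain is not contained in ${\cal R}$, which the paper dismisses by asserting that Proposition~\ref{coprimesum} still applies), and your value $(\log 2)(\log\kappa)$ for the integral is the correct one --- the paper's displayed constant $\frac{12}{\pi^2}\frac{\log c}{\log 2}$ is a typo for $\frac{12}{\pi^2}(\log 2)(\log c)$, without which the limit would not equal $1$.
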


\begin{proof}
Even if $T$ is not a ${\cal Q}$-function, its expectation ${\mathbb  E}[T_n] $  is expressed as a Riemann sum of the function $2 \omega$, in a domain ${\cal T}_c$. However the domain  ${\cal T}_c$  is not a subset of the rectangle ${\cal R}$. We have indeed
$$ {\mathbb  E}[T_n]  = \int_0^1 T(\alpha, n) d\alpha
 =   \int_0^1  d\alpha \sum_k \, [\![ q_k(\alpha) \in [n, cn[ ]\!]$$
$$= \sum_k\sum_{\m \in {\mathbb N}^k} |I_k(\m)| \,  [\![ q_k(m) \in [n, cn[ ]\!]$$
$$=  \frac 1 {n^2} \sum_k\sum_{\m \in {\mathbb N}^k}   \omega\left( \frac {q_{k-1}(\m)}{n},  \frac {q_{k}(\m)}{n}\right) \left [\!\!\left[ \frac {q_k(\m)}{n} \in [1, c[ \right]\!\!\right]$$
$$ = 2 \sum_{(a, b) \in {\mathbb Z}^2\atop{\gcd(a, b) = 1}}  \omega\left( \frac {a}{n},  \frac {b}{n}\right)  \left[\!\!\left[  \frac {a}{n} \le \frac {b}{n},    1 \le \frac {b}{n} \le c\right]\!\!\right] $$
$$= \hat R_n ( 2\omega, {\cal T}_c),  \quad 
  \hbox{with} \quad  {\cal T}_c =\{ (x, y) \mid x \le y ,  1 \le y \le c\}\, .$$
  Even if ${\cal T}_c$ is not a subset of ${\cal R}$, Proposition \ref{coprimesum} applies, and 
  the coprime Riemann series    admits  a limit equal  to the integral
  $$ \frac {6}{\pi^2} I(2 \omega, {\cal T}_c)= \frac {12}{\pi^2} \frac {\log c}{\log 2} \qquad  \EOP$$
  \end{proof}

\section{Conclusions}

Beginning from the question ``what does the recurrence function of  a  random Sturmian words look like?'', we define and work within a model that is natural at least from an algorithmic standpoint:  pick a  large integer $n$ and let the slope of the word be drawn at random from $[0,1]$.  We are led  to the notion  of the so-called $\cal Q$-functions:  functions that, given $n$ and a slope $\alpha$, place $n$ within the sequence of continuants $k \mapsto q_k(\alpha)$ of $\alpha$, namely consider the index $k$  for which  $n \in [q_{k-1}(\alpha),  q_k(\alpha)[$,  
and then return a value depending only on  the two ratios $(1/n) q_{k-1}(\alpha)$ and  $(1/n) q_{k}(\alpha)$. 
The recurrence quotient  of Sturmian words defines such a ${\cal Q}$ function,  via a Theorem of Morse and Hedlund,  where $n$ is the  length of the factors and $\alpha$ the slope of the word.

\medskip
 Then,  we    study the distribution of  a general $\cal Q$-function. It defines in fact a  sequence of distributions,  and we  prove  that the limit distribution  and the limit densities  exist. They  all involve,   as  a sort of reference density,   the density $\psi$ defined in \eqref{psi}, which plays a similar role to that of the Gauss density  (defined in \eqref{Gauss}) when one studies functions that depend on the ratio $q_{k-1}(\alpha)/q_k(\alpha)$, and appears in our  study  \cite{mfcs}.

 \medskip
 Our results apply in particular to 
  the recurrence quotient of Sturmian words; we  exhibit  the limit distribution (and the limit density) of such a quotient. 
 We wish to compare this probabilistic  study  to the results of  Morse and Hedlund, which exhibit  extreme behaviours, attained  when $n$ is close to the left border $q_{k-1}(\alpha)$ of the interval $[q_{k-1}(\alpha), q_k(\alpha)[$ containing the integer $n$. That is  why we also  consider conditional expectations,  our conditional events  are related to the  various parameters which describe  the  position  of  the integer $n$ inside $[q_{k-1}(\alpha), q_k(\alpha)[$.  We then compare this ``constrained probabilistic'' behaviours to the extreme behaviours, in a precise manner.  
 
 \medskip
We had previously performed a similar study in \cite{mfcs} under another probabilistic model, where it is rather the index $k$ of the interval $[q_{k-1}(\alpha), q_k(\alpha)[$ the integer $n$ belongs to that is fixed. Then for $k \to \infty$, we exhibited limit  distribution and limit densities all of which involve,   as  a sort of reference density,  the Gauss density.  The two models are clearly different,  but the two types of  results show certain similarities.

  \newpage
  
  \section {Annex}
  
  \subsection{Proof of the constant 4 in Proposition 3.2.}
To see where the constant 4  comes from,  we first replace $\Omega(k)$ by a closed convex polygon $\mathcal{C}_n \subset \Omega$, without affecting the  bound :  in each square  $\mathcal{R}_{a,b}$  of the second case, 
pick a point in $\Omega(k)$ and then take the convex hull. If $\Omega(k)$ is a closed convex polygon, we go through the border in clockwise order and look at the grid rectangles we encounter as explained in Figure \ref{fig:convexcount}.  

\begin{figure}[h]
\centering
\includegraphics[width=0.48 \textwidth]{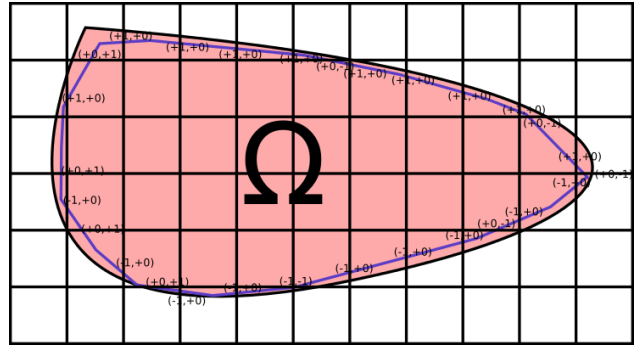} 
\caption{\small The convex domain $\Omega$, and,  in blue,  a convex polytope ${\cal P}$. These two convex sets   have  the  same grid squares that intersect both  themselves and their complement. We traverse the polygon clockwise from the lowest vertex. Each time we intersect a horizontal line we move $\pm 1$ square horizontally in the grid, similarly for the vertical lines, and diagonals. Being the polygon convex, once we stop moving upwards vertically (at most $n$ steps), we can only move downwards (at most $n$ steps) when moving vertically. A similar observation for the horizontal case tells us that there can be at most $2n$ horizontal steps.}
\label{fig:convexcount}
\end{figure}

  \subsection {Proof of  Theorem \ref{thm2} $(a)$.}   
Assertion $(a)$  of Theorem \ref{thm2}  describes the main properties  of the first two derivatives of the function  $\lambda \mapsto I_f(\lambda)$.  These properties are closely related to the geometry  of the figure formed with the rectangle ${\cal R}$ together with   the  set of lines  ${\cal F}$ containing a  given point $(x_0, y_0)$.

\medskip 
{\bf \em The set of lines ${\cal F}$. }
  In the  set ${\cal F}$ of lines, defined as  
 $${\cal F} := \{ f(x, y) = \lambda \mid \lambda \in  {\mathbb R} \}, $$
 the  equation of the line $f(x, y) = \lambda $ is written  in terms of coefficients described in \eqref{coeff} as
 \begin{equation} \label {F}
  (a_1 x + b_1 y + c_1)- \lambda\, ( a_2  x + b_2 y + c_2) = 0  \, .
  \end{equation}
The case where the two vectors $(a_1, b_1, c_1)$ and $(a_2, b_2, c_2)$ are colinear is excluded, as in this case $f(x, y) $ is constant.  The case where $b_1 = b_2 = 0$ is also excluded as we wish that $f$ depend on $y$. Then, there is at most one vertical  line in ${\cal F}$. 

 \smallskip 
There are  two  cases for the  set ${\cal F}$  defined in \eqref{F}.
\begin{itemize} 

\item [$(i)$] the case when the determinant $ r(a, b) := a_1 b_2 - a_2 b_1$ is zero and in this case the determinant $ r(a, c) := a_1 c_2 - a_2 c_1$ is not zero. The set ${\cal F}$ is formed with  parallel lines of slope $-a_1/b_1$. This is for instance  the case of the recurrence quotient  with slope  $-1$  or the case of $\nu$ with slope $0$. 

\item[$(ii)$] the case when the determinant  $ r(a, b) := a_1 b_2 - a_2 b_1$ is not zero.  In this case, we can choose $r(a, b) = 1$ due to the homogeneity of the problem. Then, the set ${\cal F}$ is formed with all the lines which contain the point $(x_0, y_0)$   uniquely defined by 
the relations 
$$ \begin{pmatrix} a_1& b_1\cr
a_2& b_2\cr \end{pmatrix} \begin{pmatrix} x_0\cr y_0\end{pmatrix} \! = \!\begin{pmatrix} -c_1 \cr -c_2\end{pmatrix}  \ \ \hbox{or} \ \  
  \begin{pmatrix} x_0\cr y_0\end{pmatrix} \! =  \! \begin{pmatrix} r(b, c)\cr  -r(a, c)\end{pmatrix} \, .$$
\end{itemize}

\smallskip
The point $(x_0, y_0)$ is called  the basic point of ${\cal F}$. 
Remark that case $(i)$   can be seen as the limit of the case $(ii)$ when $(x_0, y_0 )$ tends to $\infty$ in the direction $a_1/b_1$.    
The basic points attached to  our parameters $\rho, \mu$
 are $ (0, 0) \ $ for $\rho$ and  $(1,1)$ for $\mu$.

 In the set ${\cal F}$   of basic point $(x_0, y_0)$, the value of $\lambda$ and the inverse $1/ \tau$ of  the  slope $\tau$ of the line $f(x, y) = \lambda$,  are related via linear fractional transformations with determinant equal to 1, namely
$$ \lambda =  F(\tau) =  \frac { a_1\tau  + b_1 }{a_2  \tau+ b_2 } \,  \qquad  \tau =  G(\lambda) = \frac {b_2\lambda  -b_1}{ a_2 \lambda -  b_2}\, . $$
  In the set ${\cal F}$   of  basic point $(x_0, y_0)$, the parametrization of the  line  $f(x, y) = \lambda$ of slope $1/\tau$  is  thus 
$$x= x_0 + \tau(y-y_0) , \qquad  \tau = G(\lambda) \, .$$

\medskip 
{\bf \em  Expressions of $I_f$ and its derivative.}
Consider a function $f$ as in \eqref{coeff}; denote by  $\delta_f(\tau)$  the segment  (possibly empty or unbounded) which is the intersection of the line $f(x, y) = \lambda = F(\tau)$ of slope $1/\tau$ with the rectangle ${\cal R}$.  Now, the function $f$ is fixed,  the point $(x_0, y_0)$ is fixed, and all the indices which involve $f$ are removed. 
There is an open interval  $D$ which gathers the values of $\tau$  for which  the segment $\delta(\tau)$ is not empty, and we denote by  $A(\tau), B(\tau)$   the  ordinates of the two ends of the segment $\delta (\tau)$. \\As soon as the line $f(x, y) =  F(\tau) $ is not  horizontal, we consider  the natural  parametrization  $h_\tau$ of the line  $\delta(\tau)$, namely  a map 
 $ h_\tau: ]A(\tau),  B(\tau)[ \rightarrow   \delta(\tau)$  which associates to $y$ the point $$ h_\tau (y) = h(\tau, y) = (x_0 + \tau(y-y_0), y)$$ of the line $\delta(\tau)$. The map $ \tau \mapsto h_\tau$ is of class ${\cal C}^\infty$ on  $D$.  

Using the change of variables $(\theta, y) \mapsto (h(\theta, y), y)$, and its Jacobian $|(\partial h) /(\partial \theta)(y, \theta) | = |y-y_0|$,  the integral $ L(\tau) := L_f (\tau):= I_f(F(\tau)) = I_f \circ F(\tau)$ is written as 
  $$ L(\tau) =  \int_{-\infty }^\tau d\theta \int_{A(\theta)}^{B(\theta)}  Q( \theta, y) dy \, , $$
  $$ \hbox{with} \quad Q(\theta, y) =   \omega(x_0 + \theta(y-y_0), y) \,   |y-y_0| \, .
  $$ 
 (We have used the fact that $F$ is increasing).  Then the derivative of $L$  admits the expression 
   \begin{equation}   \label{Lprime}
  L'(\tau) = \int_{A(\tau)}^{B(\tau)}  Q(\tau, y) dy \, .  
  \end{equation} 
  The function $L'$ is itself differentiable on the set $D$, except  perhaps on a finite set (as we will see now)  and  involves the previous functions under the form
  \begin{equation} \label{Lsecond1}
 L''(\tau) = \int_{A(\tau)}^{B(\tau)} \frac{\partial Q}{\partial \tau}(\tau, y) dy 
 \end{equation}
 \begin{equation} \label{Lsecond2} +  B^\prime(\tau)  \,  Q( \tau, B(\tau)) - A^\prime(\tau)\,  Q( \tau, A(\tau))\,,
 \end{equation}
 $$ \hbox{with} \quad   \frac{\partial Q}{\partial \tau}( \tau, y )=  \frac{\partial \omega}{\partial x} (x_0 + \tau(y-y_0), y) \,  |y-y_0|^2\, .
 $$  We prefer to deal with the function $L$, as it is easier to ``see the geometry''. We will return to  the function $I$ and its two derivatives with the relations
  \begin{equation} \label{der}
   I' (\lambda)  = \frac {L'( \tau)}{F'(\tau)},  \quad  I''(\lambda)  F'(\tau)^2  = L''(\tau) - L'(\tau) \frac {F''(\tau)}{ F'(\tau)} \, , 
   \end{equation}
 and use the special form of $F$ defined in \eqref{F}.
 
  \medskip
{\bf \em  The role of the corners.}   The  values of $\tau$ in  $D$  for which $I'$ is  {\em a priori} not differentiable  are  those for which  the  line  of slope $1/\tau$ is vertical or  meets one of  the two ``corners'' of ${\cal R}$,  namely the slope $1/\tau_0$  for which  it meets   the point $(0, 1)$, and   the slope $1/\tau_1$  for which  it meets   the point $(0, 1)$. \\
There are now  two  different geometric cases:   the generic case $({\cal G})$ or the exceptional case $({\cal E})$, described as follows: 
\begin{itemize}

\item[$({\cal G})$]  If the point $(x_0, y_0)$   does not  belong to the line $y = 1$, there are exactly two lines in ${\cal F}$, each of them containing one corner of $\cal R$,  associated with  two  distinct values $\tau_0$ and $\tau_1$.%

\item[$(
{\cal E})$]  If the point $(x_0, y_0)$    belongs to the line $y = 1$, there is only one  value $\tau_0 = \tau_1 = \infty$. 
\end{itemize}

Finally, there are at most  three values of $\tau$ in the set $\{ 0,  \tau_0, \tau_1\}$ where  $L'$ is possibly not differentiable.   But, $L'$ possesses at each finite $\tau_i$  a left and a right derivative, each of them being finite.  This is thus the same for the derivative  $I'$ of the function $I$. At $\tau = 0$, the  derivatives $F'(0)$ and $F''(0)$  are finite as soon as  $b_2 \not = 0$.  


 \begin{figure*}
\centering
\begin{center}
{\def\arraystretch{2}\tabcolsep=15pt
  \begin{tabular}{| c  || c || c |}
    \hline
    Parameter $\Gamma$ & Bound for $S$ & $    {\mathbb{E}}_\psi[f_S|f_\Gamma\geq \epsilon(n)] \,	\mathbb{P}_\psi[f_\Gamma\geq \epsilon(n)] $  \\ \hline\hline
    $\rho$ & $S \leq 2 + 1/\rho \Longrightarrow B_{f_S|f_\rho}(\epsilon) = O(1/\epsilon)$ & $\ A |\log(\epsilon(n)|  + 1 - A \epsilon(n)|\log \epsilon(n)|$  \\ \hline
    $\mu$ & $S\leq1+1/\mu \Longrightarrow B_{f_S|f_\mu}(\epsilon) = O(1/\epsilon)$ & $\displaystyle\ A |\log\epsilon(n)| +     \frac {A}{1- \epsilon(n)} \epsilon(n)|\log \epsilon(n)| $ \\ \hline
    $\nu$ & $S\leq 1+2/\nu \Longrightarrow B_{f_S|f_\nu}(\epsilon) = O(1/\epsilon)$ & $
   A |\log\epsilon(n)| + 1$ \\
    \hline
  \end{tabular}
  }
\end{center}
        \caption{ In the second column, the bounds for $S$ for each parameter $\Gamma \in \{\rho, \mu, \nu\}$. In the third column,   the values of the product  ${\mathbb{E}}_\psi[f_S|f_\Gamma\geq \epsilon(n)] \,	\mathbb{P}_\psi[f_\Gamma\geq \epsilon(n)] $
        needed to apply Theorem  \ref{conditionalexpectations}. 
        The constant $A$ is  $12/\pi^2$.}
        \label{fig:boundss}
\end{figure*}

\medskip  
{\bf \em  Behaviour of $L'' (\tau)$ for  $\lambda\to 0$.}   
The ratio $ R(\tau) :=B(\tau) /A(\tau)$ is important, as the estimates
 $$ Q(\tau, y) = \Theta (y^{-1}),   \qquad  \frac{\partial Q}{\partial \tau}( \tau, y) =  \Theta (y^{-1})$$
  entail  that $L'(\tau)$ and the first term of $L''(\tau) $ in \eqref{Lsecond1} are  both $\Theta (\log R(\tau))$.

 The bound $B(\tau)$  always tends to $+\infty$ but  there are two cases for  $A(\tau)$: it  remains bounded  or not. 

\begin{itemize}

\item [$(i)$] 
 The case  when   $A(\tau)$ remains bounded  occurs if and only  if the basic point  belongs to one of the two vertical lines $x_0 = 1$ or $x_0 = 1$. Then     the estimates $R(\tau) = \Theta (\tau^{-1})$ and $L'(\tau) = \Theta(\log \tau)$,    directly entail  that $L''(\tau)$ is  $\Theta(\tau ^{-1})$. 

 \item [$(ii)$] 
If $A(\tau)$ tends also to $\infty$,   then the ratio $R(\tau)$ tends to $|x_0 -1| /|x_0|$, and this limit  may be only finite non zero.  Then, the derivatives 
$A'(\tau)$ et $B'(\tau)$ are $\Theta ( \tau ^{-2})$ whereas $A(\tau)$ and $B(\tau)$ are $\Theta (\tau^{-1})$ and thus  $Q( \tau, B(\tau))$ and $Q( \tau, B(\tau))$ are $ \Theta(\tau)$ and  each  term  of \eqref{Lsecond2} is $\Theta (\tau ^{-1})$, whereas the first term  in \eqref{Lsecond1} tends to a finite limit.
More precisely,  the estimate
$$ \tau  \Big( B'(\tau) Q( \tau, B(\tau)) - A'(\tau) Q( \tau, A(\tau)) \Big)   \to 1\, $$
ends with  \eqref{der}  the  proof of Theorem \ref{thm2}  $(a)$.
\end{itemize}

\subsection{Proof of Theorem  \ref{conditionalexpectations}.}
\begin{proof}
The conditional  expectation  is a ratio; the denominator   is ${\mathbb P}[\Gamma_n\geq \epsilon]$ whereas the numerator 
$$ \int_0^\infty {\mathbb{P}}[\Lambda_n \geq \lambda, \Gamma_n \geq \epsilon] \, d\lambda\,.$$
Associate with the pair $(\Lambda, \Gamma)$ its function pair $(f, g)$ and,  for any pair $(\lambda , \epsilon)$ of positive real numbers,  consider the  bounded convex subset  already described in \eqref{Deltafg} 
$$
\und \Delta_{f,g}(\lambda,\epsilon):=\{(x,y)\in \mathcal{R}\mid  f(x,y) \geq \lambda, g(x,y) \geq \epsilon\}\,  .
$$
We have remarked  in Section 3 that  a slight extension of  Proposition \ref{prop:discreteprob}  entails the equality 
$$
{\mathbb{P}}[\Lambda_n\geq \lambda, \Gamma_n \geq \epsilon] = \hat R_n\left( 2 \omega,  \und \Delta_{f,g}(\lambda,\epsilon)\right)\, . 
$$
 Moreover, with the convexity of  the domain  $\und \Delta_{f,g}(\lambda,\epsilon)\subset {\cal R}$,  
Proposition \ref{coprimesum} applies, yielding
$$
{\mathbb{P}}[\Lambda_n\geq \lambda, \Gamma_n \geq \epsilon]  = \frac{12}{\pi^2} I [\omega ,{\und \Delta}_{f,g}(\lambda,\epsilon)] + O\left(\frac{1}{n}\right)\,.
$$
Now we integrate on $\lambda$, noticing that we need only integrate from $0$ to $B_{f|g}(\epsilon)$
$$
\int_0^\infty{\mathbb{P}}[\Lambda_n\geq \lambda, \Gamma_n \geq \epsilon]d\lambda  
$$
 $$
= \frac{12}{\pi^2}\, \int_0^\infty  I [\omega ,{\und \Delta}_{f,g}(\lambda,\epsilon)]  d\lambda+ B_{f|g}(\epsilon)O\left(\frac{1}{n}\right)\,.
$$
 We are led to the integral of $\omega$ on the domain of ${\mathbb R}^3$ defined by 
 $$\{ (x,y,\lambda)\in \cal R \times \mathbb{R}_{\geq 0} \mid  f(x,y)\geq \lambda, g(x,y)\geq \epsilon\} $$
We interchange the summation, and we first integrate with respect to $\lambda$ (which provides the term $f(x, y)$), and obtain
$$\int_0^\infty  I [\omega ,{\und \Delta}_{f,g}(\lambda,\epsilon)]  d\lambda  
= \iint\limits_{(x,y)\in \cal R, \atop{g(x,y)\geq \epsilon}} \omega(x,y) \cdot f(x, y) dxdy\,, 
$$ 
$$
=  \frac{\pi^2}{12}\mathbb{E}_\psi[f|g\geq \epsilon] \cdot \mathbb{P}_\psi[g\geq \epsilon] \, . \qquad \EOP
$$
\end{proof}


\begin{thebibliography}{100}

\bibitem{mfcs}
V.~Berth\'e, E.~Cesaratto, P.~Rotondo, B.~Vall\'ee and A.~Viola,
\emph{Recurrence function of Sturmian Words: a probabilistic study,} in Proceedings of the 40th MFCS, Springer, 2015, Part I, pp. 116--128 


\bibitem{Lot}
M.~Lothaire,
\emph{Algebraic Combinatorics on Words,} 
Encyclopedia of Mathematics and Its Applications~90,
Cambridge University Press, Cambridge, 2002.


\bibitem{Morse}
M.~Morse and G.~Hedlund, \emph{Symbolics dynamics {II}. Sturmian trajectories,}
  American Journal of Mathematics, 62 (1940), pp. 1--42.
  
\bibitem{IoKra}
M.~Iosifescu and C.~Kraaikamp, 
\emph{Metrical Theory of Continued Fractions,} Mathematics and its Applications~547, Kluwer Academic Publishers, Dordrecht, 2002.

\bibitem{khinchin} 
A.~Ya.~Khinchin and H.~Eagle, \newblock {\em Continued Fractions,} \newblock Dover Books on Mathematics.
Dover Publications, Mineola, NY, 1964.

\bibitem{boca} F.~P.~Boca, C.~Cobeli and A.~Zaharescu, \newblock {\em A conjecture of R. R. Hall on Farey points,}  J. Reine Angew. Math., 535 (2001), pp. 207--236.

 \bibitem{hardy} G. H. Hardy and E. M. Wright, \newblock {\em Introduction to the theory of numbers,} \newblock{3rd ed.,  Oxford, at the Clarendon Press, 1954.}
  
  \end{thebibliography}
\end{document}